\documentclass[10pt,twocolumn,twoside]{IEEEtran}

\renewcommand{\baselinestretch}{0.98}

\usepackage{multicol, multirow}
\usepackage{array}
\usepackage{amsmath, amsthm, amsfonts, amssymb}
\usepackage{cite}
\usepackage{graphicx}
\usepackage{subfigure}
\usepackage{color}
\usepackage{cases}
\usepackage[]{footmisc}
\usepackage{calc}
\usepackage{physics}
\usepackage{booktabs}
\usepackage{placeins}
\usepackage{xcolor}
\usepackage{hhline}
\usepackage{kotex}
\usepackage{soul}
\usepackage{dblfloatfix}
\usepackage{algorithm,algorithmic}
\usepackage[normalem]{ulem} 

\newcommand{\Conv}{%
  \mathop{\scalebox{1.5}{\raisebox{-0.2ex}{$\circledast$}}
  }
}

\ifCLASSOPTIONonecolumn
   \renewcommand{\baselinestretch}{2.0}  
   \newcommand{\figwidth}{0.6\columnwidth}
   
   \newcommand{\widefigwidth}{0.5\columnwidth} 
 \else
   \newcommand{\figwidth}{0.9\columnwidth}  
   
   \newcommand{\widefigwidth}{0.95\columnwidth}   
\fi

\newcommand{\ignoreIt}[1] {}

\allowdisplaybreaks 
\newtheorem{theorem}{Theorem}
\newtheorem{corollary}{Corollary}
\newtheorem{lemma}{Lemma}
\newtheorem{prop}{Proposition}
\newtheorem{remark}{Remark}

\DeclareMathOperator*{\argmax}{argmax}

\begin{document}
\author{Jinyoung~Lee, Hyeonsik~Yeom \\
\thanks{
J. Lee is with the Division of Electronics and Electrical Information Engineering, National Korea Maritime \& Ocean University, Busan 49112, South Korea, email: jylee120@kmou.ac.kr.
    
H. Yeom is with the Defense ICT Convergence Research Section, Electronics and Telecommunications Research Institute (ETRI), Daejeon 34129, South Korea, email: hyeonsik@etri.re.kr.}
}

\title{Multi-User Covert Communication in Spatially Heterogeneous Wireless Networks}
\ifCLASSOPTIONonecolumn
  \renewcommand{\baselinestretch}{1.5}  
\fi

\maketitle

\begin{abstract}
This paper investigates an uplink multi-user covert communication system with spatially distributed users. Unlike prior works that approximate channel statistics using averaged parameters and homogeneous assumptions, this study explicitly models each user’s geometric position and corresponding user-to-Willie and user-to-Bob channel variances. This approach enables an accurate characterization of spatially heterogeneous covert environments. We mathematically prove that a generalized on–off power control scheme, which jointly accounts for both Bob’s and Willie’s channels, constitutes the optimal transmission strategy in heterogeneous user configurations. Leveraging the optimal strategy, we derive closed-form expressions for the minimum detection error probability and the minimum number of cooperative users required to satisfy a covert constraint. With the closed-form expressions, comprehensive theoretical analyses are conducted, which are validated by Monte-Carlo simulations. One important insight obtained from the analysis is that user spatial heterogeneity can enhance covert communication performance. Building on these findings, a piecewise search algorithm is proposed to achieve exact optimality with significantly reduced computational complexity. We demonstrate that optimization considering user's spatial heterogeneity achieves substantially improved covert communication performance than that based on the assumption of spatial homogeneity.
\end{abstract}

\begin{IEEEkeywords}
Covert communication, physical layer security, piecewise optimization, spatial heterogeneity, wireless communications
\end{IEEEkeywords}

\ifCLASSOPTIONonecolumn
  \clearpage
  \pagenumbering{arabic}
\fi

\IEEEpeerreviewmaketitle
\section{Introduction}

With the explosive proliferation of wireless devices and the Internet of Things (IoT), modern wireless networks are required to manage enormous volumes of sensitive information. At the same time, the continuous growth in wireless connectivity and device density increases the exposure of communication channels, making the wireless medium highly vulnerable to adversarial surveillance and eavesdropping. To address these threats, \textit{physical layer security} (PLS) has emerged as a complementary paradigm to cryptography, exploiting the intrinsic randomness of wireless channel at the physical layer to provide confidentiality and resilience directly at the signal level. Among the various PLS techniques, \textit{covert communication} has recently attracted considerable attention for its unique objective. Unlike conventional cryptographic approaches that protect the content of a message, covert communication aims to conceal the existence of a transmission from adversarial detectors, typically called Willie. In other words, the transmitter seeks to ensure an extremely low probability that detectors can correctly decide whether a transmission is occurring, while simultaneously guaranteeing a desired data rate for the legitimate receiver. Such concealment capability is essential for diverse mission-critical and privacy-sensitive applications, including military operations, secure IoT networks, and covert ad hoc communications \cite{Jiang24Physical}.

Early covert transmission techniques, such as direct sequence spread spectrum, frequency hopping, and chirp spread spectrum, were primarily implementation-driven. Though they achieved practical covertness, the lack of theoretical analysis motivated a fundamental question on the limits of covert communication. In \cite{Bash13Limits}, Bash \textit{et al.} introduced the \textit{square root law} for additive white Gaussian noise (AWGN) channels. The law states that, under perfect channel knowledge at Willie, no more than $\mathcal{O}(\sqrt{n})$ bits can be transmitted reliably and covertly within $n$ channel uses. Subsequent works \cite{Che13Reliable,Bloch16Covert,Wang16Fundamental,Arumugam16Keyless,Arumugam19Covert,Tan19Time,Cho21Treating} generalized this principle to other canonical channels such as binary symmetric, discrete memoryless, multiple access, broadcast, and interference models, thus establishing the theoretical foundation of covert communication. 

Regrettably, strict adherence to the square root law dictates that the transmission rate must approach zero for a sufficiently large number of channel uses. Motivated by this limitation, numerous studies have explored ways to achieve a positive covert rate by introducing uncertainty at Willie. Major approaches to achieve positive covert rates include exploiting channel uncertainty~\cite{Lee15Achieving,Shahzad21Covert}, inducing interference via relays or jammers~\cite{Wang19Covert,Sun21Covert,Sun23Covertness,Bai22OnCovert,Sobers17Covert,Li20Optimal}, and utilizing full-duplex receivers~\cite{Shahzad18Achieving,Hu19Covert}. Furthermore, recent hardware-assisted innovations have significantly enhanced covertness by leveraging the inherent directionality of mmWave systems \cite{Zhang22Multi}, the spatial reconfigurability of movable antennas \cite{Mao25Sum}, and the mobility-enabled adaptability of UAV networks \cite{Jiang21Resource,Lei24Trajectory,Xu25Collaborative}.

Meanwhile, beyond covert communication in single-user wireless networks, covert communication has also been extensively studied in multi-user wireless networks. These studies analyzed the impact of random user locations, interference, and spatial topology on covert performance. He \textit{et al.} \cite{He18Covert} analyzed a homogeneous Poisson point process network in which numerous transmitters act as uncoordinated interferers. Their analyses demonstrated that, in the interference-limited regime, the covert rate converges to a finite bound regardless of the aggregate interference power, implying that background interference alone cannot indefinitely enhance covertness. Building upon this framework, Soltani \textit{et al.} \cite{Soltani18Covert} introduced friendly nodes that intentionally generate artificial noise. By activating the node closest to each Willie, the legitimate parties maintain Willie’s detection error probability near one half and achieves improved covertness through artificial noise. Although these studies provide valuable insights into the statistical behavior of large-scale stochastic networks, they rely on uncoordinated and statistically modeled interference. In particular, the interference sources are either naturally occurring, as in \cite{He18Covert}, or intentionally generated but still independent and non-cooperative, as in \cite{Soltani18Covert}. Consequently, these works do not investigate the potential of coordinated or jointly optimized user cooperation for actively enhancing covert performance.

More recent studies focus on \textit{cooperative} or \textit{active} multi-user covert communication, where legitimate users deliberately coordinate to confuse Willie \cite{Zheng21Wireless,Lee23Multi,Lee24Channel,Yeom24Covert}. Zheng \textit{et al.} \cite{Zheng21Wireless} proposed a distributed cooperative jamming scheme in which multiple friendly jammers are selected based on instantaneous channel gains to maximize the covert rate. Although the analysis primarily relies on numerical methods, their results demonstrate that cooperative diversity can significantly enhance covertness. Lee \textit{et al.}~\cite{Lee23Multi} further elaborated the advantages of cooperative diversity in covert communication. In particular, they derived closed-form expressions for the system performance that explicitly elucidate the relationships among system parameters, and proposed a low-complexity algorithm that maximizes the covert rate by jointly optimizing the transmit powers of the cooperative users and the covert message. Later, \cite{Lee24Channel} incorporated channel correlation, revealing that the presence of correlation among users can actually improve covertness, and proposed a scalable $Q$-learning based optimization algorithm. While these works have established the fundamental benefits of cooperation, they commonly assume homogeneous user configurations where all users are located at equal distances from both the legitimate receiver and Willie. To overcome the fundamental limitations of existing analyses for uplink multi-user covert communication, Yeom \textit{et al.} \cite{Yeom24Covert} investigated the covert communication performance in the uplink multi-user system where users are located at unequal distances from Bob. Although the users are spatially distributed, their mathematical analysis was derived based on single average values for the distances from the users to Bob and from the users to Willie, which substantially simplifies the analytical derivations. However, this oversimplifying assumption does not allow the analysis to capture the impact of users’ spatial heterogeneity on the covert communication performance.

Motivated by this limitation, this work focuses on how heterogeneous spatial configurations among multiple users affect the covert communication performance. Unlike the existing works \cite{Lee23Multi, Lee24Channel}, we consider an uplink multi-user covert communication network in which the user locations are spatially heterogeneous. In addition, we analyze the covert communication performance without relying on the oversimplifying assumption adopted in \cite{Yeom24Covert}, thereby enabling us to capture how spatially heterogeneous users affect the covert communication performance. The major contributions are summarized as follows:

\begin{enumerate}
    \item \textbf{Optimal power allocation strategy for covert communication:} 
    We mathematically prove that the on–off power allocation policy that accounts for all channels between the legitimate parties and Willie is optimal. This result reveals that the relative strength between the channel gain from each user to Bob and the variance of the channel from each user to Willie serves as a fundamental metric governing cooperative behavior. By analytically characterizing this ratio within a spatially heterogeneous network, the proposed framework provides the first geometric interpretation of how user locations influence the covert rate.
      
    \item \textbf{Closed-form expressions for the covert communication parameters:} 
    Closed-form analytical expressions are derived for key system parameters, including the minimum number of cooperative users and the activation threshold required to satisfy a convert constraint, and the covert rate. 
    These results yield explicit design guidelines and quantitative benchmarks for evaluating and optimizing covert performance under realistic spatially heterogeneous conditions.

    \item \textbf{Algorithmic efficiency:}
    By deriving all key parameters in the analytical forms, the proposed framework establishes explicit analytical relationships among system variables. Leveraging these relationships, we propose an algorithm offering high computational efficiency while maintaining analytical optimality within the heterogeneous multi-user framework.
\end{enumerate}
The subsequent sections of this work are detailed as follows. Section \ref{Sec:System} introduces the system model and Willie's detection strategy. In Section \ref{Sec:On-OffScheme}, the optimal transmit power profile for cooperative users is derived. Subsequently, Section \ref{Sec:DEP_Willie} derives closed-form expressions for the minimum number of cooperative users and the activation threshold required to satisfy the covert constraint. Building on these analyses, Section \ref{Sec:Optimization} proposes a piecewise search algorithm to efficiently search for the optimal covert message power. Finally, Section \ref{Sec:Simulation} validates the theoretical findings with numerical results, followed by the conclusion in Section \ref{Sec:conclusion}.

\section{System Model and Problem Formulation} \label{Sec:System}

\subsection{System Model} \label{Subsec:SystemModel}

\begin{figure}[t]
\centering
\includegraphics[width=\widefigwidth]{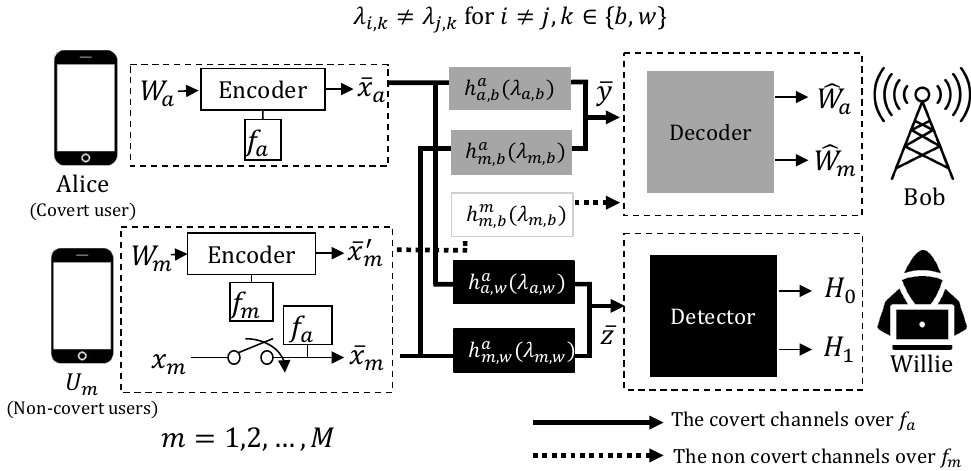}
\caption{System model of multi-user cooperation for covert communication.}
\label{fig:System Model} 
\end{figure}

We investigate uplink covert communication networks consisting of a covert user (Alice), $M$ non-covert users, a base station (Bob), and a watchful adversary (Willie), as illustrated in Fig. \ref{fig:System Model}. We assume that all users, including Alice and the non-covert users, communicate with Bob using frequency division multiple access. In particular, the $m$-th non-covert user, denoted by $U_m$, and Alice transmit data symbols $W_m$ and $W_a$ through carrier frequencies $f_m$ and $f_a$, respectively. A subset of non-covert users is selected as cooperative users to assist in hiding Alice’s signal from detection at Willie by transmitting interference over $f_a$. Hereafter, the term ``users'' refers to both the covert user and the non-covert users. We assume that all nodes including users and Bob, and Willie are equipped with a single antenna.

The channels are modeled as quasi-static Rayleigh fading, implying that the channel remains constant during a codeword transmission but varies independently across codewords. The channel from node $x$ to node $y$ at carrier frequency $z$ is denoted by 
\begin{equation}
 h_{x,y}^z \sim \mathcal{CN}(0,\lambda_{x,y}), \quad x,y \in \{ a,b,m,w \}, z \in \{ a,m \}   
\end{equation}
where $\lambda_{x,y}$ indicates the large-scale fading coefficient, and the indices $a,b,m$, and $w$ represent Alice, Bob, $U_m$, and Willie, respectively. In this work, the large-scale fading coefficient is assumed to be invariant regardless of the carrier frequency, and therefore the superscript indicating the carrier frequency is omitted from $\lambda_{x,y}$. We assume a single-slope path-loss model, and thus, the large-scale fading coefficient is expressed as
\begin{equation}
    \lambda_{x,y} = \beta_0 d_{x,y}^{-\alpha},
\end{equation}
where $\beta_0$ is the reference channel gain, $\alpha$ is the path-loss exponent, and $d_{x,y}$ denotes the Euclidean distances between nodes $x$ and $y$. Unlike the homogeneous channel assumption in \cite{Yeom24Covert, Lee23Multi, Lee24Channel}, where all users have a single large-scale fading coefficient, i.e., $\lambda_{m,w} = c$ for all $m$, this work considers on each user's distinct spatial position. Consequently, $\lambda_{m,w} \neq \lambda_{n,w}$ for $m \neq n$, capturing the spatial heterogeneity that sets our analysis apart from prior works.

Channel estimation process consists of two-phases. In the first phase, Bob broadcasts pilot signals over all frequency bands, i.e., $f_a$ and $\{ f_m \mid m \in \{ 1,2,\ldots, M\} \}$. Based on the received pilot signals, each node estimates the channel from Bob to itself. In the second phase, each user transmits its own pilot signal over its allocated frequency band. For covert communication, each non-covert user transmits a secret orthogonal pilot signal over $f_a$ \cite{Bash13Limits, Zheng21Wireless, Xu19Pilot}. The use of secret orthogonal pilot signals by each non-covert user enables Bob to estimate the channel between the non-covert user and Bob, whereas Willie cannot. Note that we assume perfect channel estimation; hence, the effect of channel estimation errors is neglected. In addition, channel reciprocity is assumed to hold, i.e., $h_{x,y}^z = h_{y,x}^z$ for all $x$ and $y$.

The $m$-th non-covert user, $U_m$, and Alice transmit codewords $\bar{x}'_m$ and $\bar{x}_a$ with length $N$ over $f_m$ and $f_a$, respectively, while each cooperative user also sends an interference sequence $\bar{x}_m$ over $f_a$ to help conceal $\bar{x}_a$. It is noteworthy that the interference signal, $\bar{x}_m$, also acts as interference at Bob, which degrades the covert communication performance. Therefore, the design of the interference power and the selection of cooperative users are of significant importance.  Our main interest lies in covert communication; hence, we only focus on the signal in $f_a$, where the covert signal is transmitted. The received signal at Bob $y[t]$ is given by
\begin{align} \label{eq:y_Bob}
    y[t]= 
  \begin{cases}
    \displaystyle\sum_{m = 1}^{M} \sqrt{P_m} h^a_{m, b} x_m[t] + n_b[t], & \mathcal{H}_0 \\
    \displaystyle\sqrt{P_a}h^a_{a, b} x_a[t] \\ ~~~~+ \sum_{m = 1}^{M} \sqrt{P_m}h^a_{m, b}x_m[t] + n_b[t], & \mathcal{H}_1
  \end{cases},
\end{align}
where both $x_a[t]$ and $x_m[t]$ are assumed to be complex Gaussain random variables with zero-mean unit-variance. The parameters $P_a$ and $P_m$ represent the transmit power of Alice and $U_m$, respectively, and $n_b[t] \sim \mathcal{CN}(0, \sigma_b^2)$ denotes AWGN at Bob. In addition, $\mathcal{H}_0$ and $\mathcal{H}_1$ indicate the hypotheses that Alice does not transmit or transmits a covert message, respectively. 
Similarly, the received signal at Willie over $f_a$ is given by
\begin{equation} \label{eq:y_Willie}
    z[t]= 
  \begin{cases}
      \displaystyle\sum_{m = 1}^{M} \sqrt{P_m} h^a_{m, w} x_m[t] + n_w[t], & \mathcal{H}_0 \\
      \sqrt{P_a}h^a_{a, w} x_a[t] \\ ~~~~+ \sum_{m = 1}^{M} \sqrt{P_m} h^a_{m, w} x_m[t] + n_w[t], & \mathcal{H}_1
  \end{cases},
\end{equation}
where $n_w[t] \sim \mathcal{CN}(0, \sigma_w^2)$ stands for AWGN at Willie.

\subsection{Detection at Willie}

Willie infers the presence of covert data transmission from Alice based on his observation. As introduced in Section \ref{Subsec:SystemModel}, Willie is confronted with a binary hypothesis testing problem consisting of two hypotheses $\mathcal{H}_0$ and $\mathcal{H}_1$. Under the conventional equal prior assumption \cite{Bash16Covert, Sobers17Covert}, the detection error probability (DEP) can be defined to
\begin{equation}
\zeta \triangleq P_{\mathrm{FA}} + P_{\mathrm{MD}},
\end{equation}
and the covert communication requirement is formalized through the constraint $\zeta \ge 1 - \epsilon$, where $\epsilon > 0$ is arbitrarily small~\cite{Hu19Covert, Lee23Multi, Lee24Channel}.

Since a detection method employed by Willie has a significant impact on the DEP, it is necessary to specify the detection method assumed in the analysis of covert communication performance. According to the Neyman–Pearson Lemma \cite{Neyman33OntheProblem}, the likelihood ratio test (LRT) provides the optimal detection performance for the binary hypothesis test and is expressed as 
\begin{equation}
    \frac{\textstyle\prod_{t=1}^{N} f\bigl(z[t]; \mathcal{H}_1\bigr)}
           {\textstyle\prod_{t=1}^{N} f\bigl(z[t]; \mathcal{H}_0\bigr)}
    \mathop{\lessgtr}_{\mathcal{D}_1}^{\mathcal{D}_0} \gamma,
\end{equation}
where $f(\cdot)$ represents the probability density function (PDF) and $\gamma$ denotes the detection threshold. It is noteworthy that when the transmitted signals follow a Gaussian distribution, the received energy, i.e., $T_w = \sum_{t=1}^{N} |z[t]|^2 /N$, serves as a sufficient statistic for the LRT \cite{Li20Optimal, Zheng21Wireless}. Consequently, the LRT can be equivalently reduced to a test of the received energy, i.e., $T_w$ as follows:

\begin{equation} \label{eq:test}
T_w \mathop{\lessgtr}_{\mathcal{D}_1}^{\mathcal{D}_0} \gamma.
\end{equation}
Since $x_a[t], x_m[t]$ and $n_w[t]$ are uncorrelated Gaussian random variables, the distribution of $T_w$ under both hypotheses follows a chi-square distribution, as given by
\begin{align}
  T_w =
\begin{cases}
  \displaystyle\left(\sum_{m=1}^{M} P_m |h^a_{m,w}|^2 + \sigma_w^2 \right) \frac{\chi^2_{2N}}{2N}, & \mathcal{H}_0 \\[8pt]
  \displaystyle\left(\sum_{m=1}^{M} P_m |h^a_{m,w}|^2 + P_a |h^a_{a,w}|^2 + \sigma_w^2 \right) \frac{\chi^2_{2N}}{2N}, & \mathcal{H}_1
\end{cases}  
\end{align}
where $\chi^2_{2N}$ denotes a chi-square random variable with $2N$ degrees of freedom. When the length of codeword, $N$, is sufficiently large, the ratio $\chi^2_{2N}/(2N)$ converges to unity due to the strong law of large numbers \cite{Casella02StatInference}. Accordingly, the test statistic $T_w$ can be characterized as
\begin{align}\label{eq:test_statistic}
   T_w =
  \begin{cases}
    \displaystyle\sum_{m=1}^{M} P_m |h^a_{m,w}|^2 + \sigma_w^2, & \mathcal{H}_0 \\[4pt]
    \displaystyle\sum_{m=1}^{M} P_m |h^a_{m,w}|^2 + P_a |h^a_{a,w}|^2 + \sigma_w^2, & \mathcal{H}_1
  \end{cases}.
\end{align}
Then, the DEP, $\zeta(\gamma)$, for a given threshold $\gamma$ is given by
\begin{equation} \label{eq:DEP}
  \zeta(\gamma) = \Pr(T_w > \gamma \mid \mathcal{H}_0) + \Pr(T_w \leq \gamma \mid \mathcal{H}_1).
\end{equation}
Note that $\zeta(\gamma)$ depends on the detection threshold $\gamma$. To consider the worst case, we assume that Willie chooses the threshold $\gamma$ that minimizes the DEP, i.e.,
\begin{equation}
    \zeta_{\min} = \min_{\gamma} \zeta(\gamma).
\end{equation}
It is shown in \eqref{eq:test_statistic} that Willie faces uncertainty caused by (i) his access only to the statistical properties of the channel, $|h_{m,w}|^2$, and (ii) the randomness of the aggregated interference power from spatially distributed users, $\small\sum_{m=1}^{M} P_m |h^a_{m,w}|^2$. Consequently, the spatial heterogeneity of user locations becomes a key factor determining covert communication performance.

\subsection{Optimization Problem}

This work focuses on the transmit power optimization problem to maximize the achievable rate while satisfying the covert constraint. The achievable rate is defined as
\begin{equation}\label{eq:rate_def}
R = \log_2\left(1 + \frac{P_a |h^a_{a,b}|^2}{\sum_{m=1}^M P_m |h^a_{m,b}|^2 + \sigma_b^2}\right).
\end{equation}
Then, the transmit power optimization problem can be formulated as
\begin{subequations}\label{eq:original_opt}
\begin{align}\label{eq:rate_opt}
&(P_a^*, \bar{P}^*) = \argmax_{P_a, \bar{P}} ~ R \\
& ~~~~\text{s.t.} ~~ \zeta_{\min} \geq 1- \epsilon \label{eq:CovertConst}\\
& ~~~~~~~~~ 0 \leq P_a \leq P_{\max} \label{eq:PowConst_Alice}\\
& ~~~~~~~~~ 0 \leq P_m \leq P_{\max}, \quad \forall m, \label{eq:PowConst_Noncovert}
\end{align}
\end{subequations}
where $\bar P \triangleq (P_1,P_2,\ldots,P_M)$ is the transmit power profile for non-covert users, $P_{\max}$ is the maximum transmit power, and $\epsilon$ is the tolerable covertness error. The covert constraint is given in \eqref{eq:CovertConst}, and the transmit power constraints are given in \eqref{eq:PowConst_Alice} and \eqref{eq:PowConst_Noncovert}, respectively.

\section{Optimal Transmit Power Profile} \label{Sec:On-OffScheme}

In this section, the optimization problem in \eqref{eq:original_opt} is reformulated to a more tractable and simpler form by analyzing the optimal transmit power profile $\bar{P}$ for a given $P_a$. 

\begin{prop}\label{prop:power_profile}
For sufficiently large $M$, the optimal transmit power profile for non-covert users under spatial heterogeneity is an on–off scheme for a given $P_a$, i.e., 
\begin{equation}
       P_m^*= 
  \begin{cases}
      P_{\max}, & \dfrac{|h^a_{m,b}|^2}{\lambda_{m,w}} \leq \tau \\
      0 , & \text{otherwise}.
  \end{cases},
\end{equation}
\end{prop}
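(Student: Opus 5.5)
For a fixed $P_a$, I would split the problem in two. The rate \eqref{eq:rate_def} depends on $\bar P$ only through Bob's interference $I_b(\bar P)=\sum_m P_m|h^a_{m,b}|^2$. The covert constraint depends on $\bar P$ only through Willie's view: he knows the $\lambda_{m,w}$ and the powers, but not the realizations $|h^a_{m,w}|^2$. So the subproblem is to minimize $I_b(\bar P)$ subject to $\zeta_{\min}(\bar P)\ge 1-\epsilon$ and $0\le P_m\le P_{\max}$.

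\textbf{Step 1: a tractable form of the covert constraint.} By \eqref{eq:test_statistic}, $Y=\sum_m P_m|h^a_{m,w}|^2$ is a sum of independent exponentials with means $P_m\lambda_{m,w}$. For sufficiently large $M$ I would invoke the Lyapunov central limit theorem and approximate $Y$ as Gaussian with mean $\mu=\sum_m P_m\lambda_{m,w}$ and variance $\sigma^2=\sum_m P_m^2\lambda_{m,w}^2$. This requires the $P_m\lambda_{m,w}$ to be uniformly bounded with $\sigma\to\infty$, which holds under the bounded-distance geometry. Under $\mathcal H_1$ the statistic is shifted by $P_a|h^a_{a,w}|^2$. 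Minimizing \eqref{eq:DEP} over $\gamma$ then makes $\zeta_{\min}$ a function that increases with the normalized spread $\sigma/P_a$ and is essentially insensitive to $\mu$, because $\mu$ is a deterministic shift that Willie compensates through $\gamma$. The covert constraint therefore reduces to a variance requirement $\sum_m P_m^2\lambda_{m,w}^2\ge V(P_a,\epsilon)$ for some threshold $V$.

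\textbf{Step 2: the resulting program.} In the variables $q_m=P_m\lambda_{m,w}\in[0,P_{\max}\lambda_{m,w}]$, the problem becomes
\begin{equation*}
\min_{q}\ \sum_m \frac{|h^a_{m,b}|^2}{\lambda_{m,w}}\,q_m \quad \text{s.t.}\quad \sum_m q_m^2\ge V .
\end{equation*}
The feasible set is the exterior of a ball intersected with a box, and the objective is linear. A linear function minimized over the box-constrained region is minimized at an extreme point of the box, so each $P_m$ sits at $0$ or $P_{\max}$; any other active feasible point can be perturbed along the sphere to lower the objective.

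\textbf{Step 3: which users turn on.} I would use an exchange argument. Suppose users $i$ and $j$ satisfy $r_i=|h^a_{i,b}|^2/\lambda_{i,w}>r_j$, with $i$ on and $j$ off. Switching $i$ off and $j$ on changes the objective by $P_{\max}(|h^a_{j,b}|^2-|h^a_{i,b}|^2)$ and changes the variance by $P_{\max}^2(\lambda_{j,w}^2-\lambda_{i,w}^2)$. So the swap is improving and stays feasible only in a favorable case, and turning users on in increasing order of $r_m$ yields a threshold rule $r_m\le\tau$.

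\textbf{Main obstacle.} The swap is not automatically feasible, because the two users contribute different amounts of variance. Reconciling per-user cost with per-user variance is where the argument is delicate. My first attempt would be the large-$M$ regime: there I would show the constraint becomes effectively linear in the number and weight of active users. Equivalently, I would treat the problem as a continuous knapsack whose optimal order is governed by $r_m$, with cost $|h_{m,b}|^2P_{\max}$ per unit weight $\lambda_{m,w}P_{\max}$. I would then argue that only the marginal user deviates from the pure threshold, and that its effect vanishes as $M\to\infty$. This yields the stated on–off rule with $\tau$ chosen as the smallest value meeting the covert constraint.
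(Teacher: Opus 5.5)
Your Step 1 agrees with the paper. Once Willie's threshold is eliminated, $\zeta_{\min}$ depends on $\bar P$ only through $\Psi=\sum_m\lambda_{m,w}^2P_m^2$ and increases with it.

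\textbf{The gap.} The gap is the passage from this variance constraint to the ratio $r_m=|h^a_{m,b}|^2/\lambda_{m,w}$. You flag it but do not close it, and it cannot be closed inside your Step 2 program. There, switching user $m$ on costs $P_{\max}|h^a_{m,b}|^2$ and buys $P_{\max}^2\lambda_{m,w}^2$ of variance. So any exchange or greedy argument ranks users by cost per unit variance, $|h^a_{m,b}|^2/\lambda_{m,w}^2$, not by $r_m$. The $r_m$-threshold is strictly suboptimal for that program however large $M$ is. For example, take $P_{\max}=1$, half of the users with $(|h^a_{m,b}|^2,\lambda_{m,w})=(1,1)$ (so $r_m=1$) and half with $(3,2)$ (so $r_m=3/2$). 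Take your variance threshold $V\le M/2$, a multiple of $4$. The $r_m$-rule switches on only users of the first kind and pays at least $V$. By contrast, $V/4$ users of the second kind meet the constraint at cost $3V/4$. Hence ``the constraint becomes effectively linear'' is not a clean-up step. It is the only source of the $r_m$ ordering, and your proposal gives no mechanism for it.

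Separately, the Step 2 feasible set (the box minus an open ball) is nonconvex, so ``extreme point of the box'' does not apply. Perturbing along the sphere only excludes two or more fractional coordinates. It leaves one boundary user, consistent with what you say later.

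\textbf{How the paper closes it.} The paper uses Sedrakyan's (Cauchy--Schwarz) inequality $\Psi\ge\psi^2/M$, with $\psi=\sum_m\lambda_{m,w}P_m$, and takes it to be tight for large $M$. Then $\zeta_{\min}$ is increasing in $\psi$, and the covert constraint becomes the linear constraint $\sum_m\lambda_{m,w}P_m\ge\delta$. The subproblem is now a linear program: a fractional covering knapsack with cost $|h^a_{m,b}|^2$ and weight $\lambda_{m,w}$. Its KKT conditions give $P_m^*=P_{\max}$ for $r_m<\nu^*$ and $P_m^*=0$ for $r_m>\nu^*$. One fractional user remains at $r_m=\nu^*$. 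The paper rounds it into the on--off rule with $\tau=\nu^*$, on the grounds that this user cannot know the active set.

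To complete your argument you must adopt this replacement of $\Psi$ by $\psi^2/M$, or give an equivalent justification. Be aware that it is an approximation, not a consequence of large $M$ by itself. Equality in Cauchy--Schwarz requires all $\lambda_{m,w}P_m$ to be equal, which any on--off profile with $0<K<M$ violates. So the $r_m$-threshold is optimal only under that approximation.
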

\begin{proof}
The proof is provided in Appendix \ref{Appendix:power_profile}.
\end{proof}

Based on the on-off scheme presented in Proposition \ref{prop:power_profile}, we reformulate the original optimization problem in \eqref{eq:original_opt} as 
\begin{subequations} \label{eq:reformul_opt}
\begin{equation} 
  \left(P_a^*, \tau^* \right)=\argmax_{P_a, \tau} ~ R 
   ~~~~ \text{s.t.} ~~ \eqref{eq:CovertConst} \text{ and } \eqref{eq:PowConst_Alice}. 
  \end{equation}
\end{subequations}
Note that the reformulated optimization problem in \eqref{eq:reformul_opt} has only two optimization variables, whereas the original optimization problem in \eqref{eq:original_opt} involves $M+1$ optimization variables. Consequently, owing to Proposition \ref{prop:power_profile}, a much simpler optimization problem can be obtained. 

\begin{remark}
    In this work, Bob optimizes the communication parameters, $P_a$ and $\bar{P}$ (or equivalently, the activation threshold $\tau$), to maximize the achievable rate for the covert message while satisfying both the covert and power constraints, given the instantaneous channel gains $\{|h_{x,b}^a| : x \in \{ a,m\} \}$ and the large-scale fading coefficients $ \{ \lambda_{x, b} : x \in \{ a,m\} \}$. The optimized parameters, i.e, $P_a$ and $\tau$, are then broadcast to users and Willie. 
\end{remark}
\begin{remark}\label{remark:Determination_K}
Intuitively, increasing the activation threshold $\tau$ allows more non-covert users to transmit interference signals. This enhances the concealment of the covert message but simultaneously degrades Bob’s communication performance due to the increased interference power. In other words, there exists a fundamental trade-off between covertness and reliable communication with respect to $\tau$. Consequently, the optimal threshold $\tau$ is selected as the minimum value that satisfies the covert constraint $\zeta_{\min} \geq 1 - \epsilon$ in~\eqref{eq:CovertConst}; that is, the optimal $\tau$, denoted by $\tau^*$, makes the covert constraint hold with equality, i.e., $\zeta_{\min}(\tau^*) = 1 - \epsilon$. Equivalently, since $\tau$ determines the number of cooperative users $K$ through the condition $|h^a_{m,b}|^2 / \lambda_{m,w} \leq \tau$, the optimal $K$ is also chosen as the minimum number of cooperative users that satisfies the covert constraint.
\end{remark}

\begin{remark}
In the existing studies \cite{Lee23Multi,Lee24Channel,Yeom24Covert}, it has been mathematically proven that the optimal power profile follows the on–off scheme that considers only the instantaneous channel gains between the non-covert users and Bob, i.e., $|h^a_{m,b}|$. In contrast, the on–off scheme proposed in this work takes into account not only $|h^a_{m,b}|$, but also the large-scale fading coefficients between the non-covert users and Willie, i.e., $\lambda_{m,w}$. This reveals a key distinction from the previous on-off scheme \cite{Lee23Multi, Lee24Channel, Yeom24Covert}, which aim only to reduce interference at Bob without accounting for Willie. In contrast, the on–off scheme proposed in this work explicitly incorporates Bob’s instantaneous channel gains, i.e., $|h_{m,b}^a|$, and Willie’s large-scale fading coefficients, i.e., $\lambda_{m,w}$, thereby selecting cooperative users that reduce interference at Bob while increasing interference observed by Willie.
\end{remark}

\begin{remark}\label{remark:Info_Willie}
In this work, we adopt a worst-case adversary model for Willie. Specifically, Willie is assumed to have perfect knowledge of the on–off scheme and the total number of cooperative users $K$. Unlike~\cite{Yeom24Covert}, which assumes a single averaged parameter $\bar{\lambda} = \mathbb{E}[\lambda_{m,w}]$ due to Willie’s incomplete knowledge of user locations, we assume that Willie knows the exact locations of all users. This enables him to compute the individual large-scale fading coefficients $\lambda_{m,w}$ for every $m$, ensuring that his detection performance fully exploits the available spatial information. Such a worst-case modeling choice leads to a conservative and robust secrecy analysis.
\end{remark}

\section{Analysis of Detection at Willie} \label{Sec:DEP_Willie}

In this section, we derive a closed-form expression of the minimum DEP and analyze its characteristics. In particular, we determine the optimal detection threshold $\gamma^*$ that minimizes the DEP, resulting in the closed-form expression of $\zeta_{\min}$. Furthermore, based on the closed-form expression of $\zeta_{\min}$, we derive the minimum number of cooperative users and the corresponding activation threshold required to satisfy the covert constraint, i.e.,
$\zeta_{\min} \ge 1-\epsilon$, both in closed-form expressions.

According to the on-off scheme in Proposition \ref{prop:power_profile}, the cooperative users transmit interference signals with $P_m = P_{\rm{max}}$, while the other users do not transmit any interference signals, i.e., $P_m = 0$. Then, under the on-off scheme in Proposition \ref{prop:power_profile}, the test statistic in \eqref{eq:test_statistic} is expressed as 
\begin{align}\label{eq:test_statistic2}
   T_w = 
  \begin{cases}
    \displaystyle\sum_{i=1}^{K} P_{\max} |h^a_{m_i,w}|^2 + \sigma_w^2, & \mathcal{H}_0, \\[4pt]
    \displaystyle\sum_{i=1}^{K} P_{\max} |h^a_{m_i,w}|^2 + P_a |h^a_{a,w}|^2 + \sigma_w^2, & \mathcal{H}_1,
  \end{cases}
\end{align}
where $m_i \in \{1,2,\ldots, K\}$ denotes the index of the $i$-th cooperative user sorted in ascending order of the ratio between the channel gain and the large-scale fading coefficient, i.e., $|h^{a}_{m_i,b}|^2 / \lambda_{m_i,w} \ge |h^{a}_{m_j,b}|^2 / \lambda_{m_j,w}$ for $i \ge j$.

Since each channel gain $|h^{a}_{m_i,w}|^2$ follows a Gamma distribution, and the sum of independent random variables is represented by the convolution of their distributions, the shifted test statistic $T_w - \sigma_w^2$ 
is expressed as
{\small
\begin{align}\label{eq:test_statistic_RV}
  T_w - \sigma_w^2 \sim
   \begin{cases}
      \displaystyle\Conv_{i=1}^K \Gamma(1, \lambda_{m_i,w} P_{\max}), & \mathcal{H}_0, \\[4pt]
      \displaystyle\Conv_{i=1}^K \Gamma(1, \lambda_{m_i,w} P_{\max}) \ast \Gamma(1, \lambda_{a,w}P_a), & \mathcal{H}_1,
   \end{cases}
\end{align}
} 
\par \noindent where $\Gamma(a,b)$ represents a Gamma distribution with shape parameter $a$ and scale parameter $b$, $\ast$ denotes the convolution operator between PDFs of two random variables, and $\Conv_{i=1}^K$ indicates the successive convolution of the PDFs of $K$ random varaibles, i.e., $\Conv_{k=1}^K A_k = A_1 \ast A_2 \ast \cdots \ast A_K$. Based on the probabilistic distribution of $T_w - \sigma_w^2$ in \eqref{eq:test_statistic_RV}, a closed-form expression for the DEP in \eqref{eq:DEP} is derived in Lemma~\ref{lemma:DEP}.

\begin{lemma}\label{lemma:DEP}
With the on-off scheme, the closed-form expression for the DEP $\zeta (\gamma)$ is approximated as  
\begin{align}\label{eq:DEP_general}
  \zeta (\gamma) 
  &\approx 1 - 
  \exp\!\left(
    -\frac{2\Delta(\hat{\gamma}-\Xi) - \Sigma}{2\Delta^2}
  \right)
  Q\!\left(
    -\frac{\Delta(\hat{\gamma}-\Xi) - \Sigma}{\sqrt{\Sigma}\,\Delta}
  \right),
\end{align}
where $\hat{\gamma} \triangleq \gamma - \sigma_w^2$, $\Delta \triangleq P_a \lambda_{a,w}$, $\Xi \triangleq \frac{KP_{\max}}{M} \sum_{m=1}^{M}\lambda_{m,w}$, and
\begin{align}
  \Sigma &\triangleq P_{\max}^2 \Bigg[
   \frac{K}{M}\sum_{m=1}^{M}\lambda_{m,w}^{2}
   + \frac{K(M-K)}{M-1}\\ 
   & \hspace{5em} \times \frac{1}{M} \sum_{m=1}^{M}
     \left(
       \lambda_{m,w}
       - \frac{1}{M}\sum_{n=1}^{M}\lambda_{n,w}
     \right)^{2}
   \Bigg].  
\end{align}
\end{lemma}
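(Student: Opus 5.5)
The plan is to write the interference term $X \triangleq \sum_{i=1}^{K} P_{\max}|h^a_{m_i,w}|^2$ and approximate it by a Gaussian, $X \approx \mathcal{N}(\Xi,\Sigma)$. The Alice component $Y \triangleq P_a|h^a_{a,w}|^2$ will be kept exactly as an exponential variable with mean $\Delta=P_a\lambda_{a,w}$. The expression in \eqref{eq:DEP_general} has exactly the structure of an exponentially modified Gaussian CDF, which supports this route.

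\textbf{Step 1: the moments $\Xi$ and $\Sigma$.} From Willie's perspective, the set of cooperative users is random, since it depends on $|h^a_{m,b}|^2$, which he does not observe. He knows only $K$ and the $\lambda_{m,w}$. For large $M$ I would model the cooperative set as a uniformly random $K$-subset of $\{1,\dots,M\}$ drawn without replacement, with the selections independent of the Willie channels.

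Conditioning on the selected indices, the law of total expectation gives $\mathbb{E}[X]=P_{\max}K\bar\lambda$ with $\bar\lambda=\frac1M\sum_m\lambda_{m,w}$, which is $\Xi$.

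For the variance I would use the law of total variance:
\begin{equation*}
\mathrm{Var}(X)=\mathbb{E}\big[\mathrm{Var}(X\mid S)\big]+\mathrm{Var}\big(\mathbb{E}[X\mid S]\big).
\end{equation*}
The first term is $P_{\max}^2\,\mathbb{E}\big[\sum_{m\in S}\lambda_{m,w}^2\big]=P_{\max}^2\frac{K}{M}\sum_m\lambda_{m,w}^2$, since exponential variance equals the squared mean. The second term is $P_{\max}^2$ times the variance of a sample total under sampling without replacement. This equals $K\frac{M-K}{M-1}s^2$, where $s^2=\frac1M\sum_m(\lambda_{m,w}-\bar\lambda)^2$ and $\frac{M-K}{M-1}$ is the finite-population correction. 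Summing the two terms reproduces $\Sigma$.

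\textbf{Step 2: Gaussian approximation.} Invoking a central limit argument (Lindeberg for non-identical summands, with $K$ large since $M$ is large), I would replace $X$ by $\mathcal{N}(\Xi,\Sigma)$.

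\textbf{Step 3: the DEP.} Using \eqref{eq:DEP},
\begin{equation*}
\zeta(\gamma)=1-\big[\Pr(X\le\hat\gamma)-\Pr(X+Y\le\hat\gamma)\big].
\end{equation*}
I would compute $\Pr(X+Y\le\hat\gamma)=\mathbb{E}_X\big[1-e^{-(\hat\gamma-X)/\Delta}\mathbb{1}\{X\le\hat\gamma\}\big]$. The difference then becomes
\begin{equation*}
\int_{-\infty}^{\hat\gamma} e^{-(\hat\gamma-x)/\Delta}\,\phi_{\Xi,\Sigma}(x)\,dx,
\end{equation*}
where $\phi_{\Xi,\Sigma}$ is the $\mathcal{N}(\Xi,\Sigma)$ density. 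Completing the square in the exponent, $-\frac{(x-\Xi)^2}{2\Sigma}+\frac{x}{\Delta}$, shifts the mean to $\Xi+\Sigma/\Delta$ and leaves the factor $\exp\big(-\frac{2\Delta(\hat\gamma-\Xi)-\Sigma}{2\Delta^2}\big)$. The remaining Gaussian integral up to $\hat\gamma$ equals $\Phi\big(\frac{\Delta(\hat\gamma-\Xi)-\Sigma}{\Delta\sqrt\Sigma}\big)=Q\big(-\frac{\Delta(\hat\gamma-\Xi)-\Sigma}{\sqrt\Sigma\,\Delta}\big)$, which gives \eqref{eq:DEP_general}.

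\textbf{Main obstacle.} The delicate step is justifying the statistical model behind $\Sigma$. The set selected by the rule $|h^a_{m,b}|^2/\lambda_{m,w}\le\tau$ is not exactly a uniform random subset: users with small $\lambda_{m,w}$ are favored. Moreover, $X$ is a sum of non-identical exponentials, so the Gaussian approximation is only asymptotic. I would handle this by stating explicitly that Willie, lacking the Bob channels, treats the selection as exchangeable. I would then argue that this is precisely the source of the "$\approx$" in the lemma, alongside the CLT approximation. The rest of the argument is routine algebra.
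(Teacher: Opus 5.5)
Your proposal is correct and follows essentially the paper's route. It uses the same uniform without-replacement model for the unknown cooperative set, the law of total variance with the finite-population correction to obtain $\Xi$ and $\Sigma$, a Gaussian approximation of the interference, and the Gaussian--exponential convolution that gives the exponentially modified Gaussian CDF. The only difference is that you Gaussianize the unconditional sum directly, whereas the paper first moment-matches to a Gamma, Gaussianizes conditionally on $\mathcal{S}$, and then averages the conditional moments.

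One small slip: the indicator must multiply the whole bracket, $\Pr(X+Y\le\hat\gamma)=\mathbb{E}_X\bigl[(1-e^{-(\hat\gamma-X)/\Delta})\,\mathbf{1}\{X\le\hat\gamma\}\bigr]$. This is in fact what your subsequent integral uses.
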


\begin{proof}
See Appendix \ref{Appendix:DEP}
\end{proof}

The analytical form of the DEP in \eqref{eq:DEP_general} enables one to obtain several design parameters in closed-form. In particular, the optimal detection threshold $\gamma^\ast$, which minimizes the DEP, is derived in closed-form and presented in Lemma \ref{lemma:gamma}.
\begin{lemma}\label{lemma:gamma}
The optimal detection threshold is given by 
\begin{equation}\label{eq:Opt_gamma}
  \gamma^\ast = \frac{K P_{\max}}{M} \sum_{m=1}^M \lambda_{m,w} + \sigma_w^2.
\end{equation}
\end{lemma}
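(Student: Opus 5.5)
The plan is to minimize the approximate DEP of Lemma~\ref{lemma:DEP} over $\hat{\gamma}$ directly, by setting its derivative to zero. First I will rewrite \eqref{eq:DEP_general} in the shifted variable $u \triangleq \hat{\gamma}-\Xi$:
\begin{equation*}
\zeta = 1 - g(u), \qquad g(u) \triangleq \exp\!\left(-\frac{u}{\Delta}+\frac{\Sigma}{2\Delta^2}\right) Q\!\left(\frac{\Sigma/\Delta-u}{\sqrt{\Sigma}}\right).
\end{equation*}
It also helps to recall where this form comes from. In the proof of Lemma~\ref{lemma:DEP}, the aggregate interference is treated as Gaussian with mean $\Xi$ and variance $\Sigma$ under $\mathcal{H}_0$. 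Under $\mathcal{H}_1$ it is that Gaussian plus an independent exponential with mean $\Delta$, i.e., an exponentially modified Gaussian. The Gaussian $Q$-terms in $P_{\rm FA}+P_{\rm MD}$ then cancel, so $g(u)=F_0(u)-F_1(u)$, the gap between the two CDFs. Minimizing $\zeta$ is therefore the same as maximizing this gap, and the stationarity condition is $f_0(u)=f_1(u)$, where the two densities cross.

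Second, I will differentiate $g$ explicitly:
\begin{equation*}
g'(u) = e^{-u/\Delta+\Sigma/(2\Delta^2)}\left[-\frac{1}{\Delta}Q(z)+\frac{1}{\sqrt{\Sigma}}\phi(z)\right], \qquad z=\frac{\Sigma/\Delta-u}{\sqrt{\Sigma}}.
\end{equation*}
Here $\phi$ is the standard normal density. The bracket equals zero exactly when the Mills-ratio condition $Q(z)/\phi(z)=\Delta/\sqrt{\Sigma}$ holds. The Mills ratio is strictly decreasing, so this equation has a unique root. Since $g\to 0$ as $u\to\pm\infty$, that root is the global maximizer of $g$, hence the minimizer of $\zeta$. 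This settles existence and uniqueness of $\gamma^\ast$.

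Third, and this is the step I expect to be the main obstacle, I will show that the root sits at $u=0$ to the accuracy of the approximation. The exact root of the Mills-ratio equation is not literally zero for finite parameters, so an asymptotic argument in the covert regime is needed. In that regime the covert constraint $\zeta_{\min}\ge 1-\epsilon$ with small $\epsilon$ forces Alice's contribution at Willie to be small relative to the interference spread, i.e., $\rho\triangleq\Delta/\sqrt{\Sigma}\to 0$. Moreover $\Sigma$ grows linearly in $K$ for large $M$, so $\rho$ also shrinks as cooperation increases. The plan is as follows.
\begin{enumerate}
\item Substitute $u=s\sqrt{\Sigma}$, so that $z=1/\rho - s$.
\item Expand the Mills ratio with its asymptotic series $Q(z)/\phi(z)=1/z-1/z^3+\cdots$.
\item Solve $1/z \approx \rho$ perturbatively in $\rho$. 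I expect the correction to give $s^\ast = O(\rho)$, i.e., $u^\ast=O(\Delta)$.
\end{enumerate}
This offset is negligible on the natural scale $\sqrt{\Sigma}$ of the test statistic. So to leading order the optimal threshold is $\hat{\gamma}^\ast=\Xi$, which gives \eqref{eq:Opt_gamma} after adding back $\sigma_w^2$.

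If this expansion turns out to be delicate, I will fall back on the equivalent symmetric-crossing view of the same step. For small $\Delta$, the $\mathcal{H}_1$ density is, to first order, a slight perturbation of the $\mathcal{H}_0$ Gaussian concentrated around its mean. Its crossing with the $\mathcal{H}_0$ density therefore lies at $\Xi$ up to $O(\Delta)$, which yields the same conclusion.
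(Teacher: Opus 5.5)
Your argument is sound at the level of approximation at which the lemma is stated, but it takes a different route from the paper.

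The paper never differentiates. It substitutes $Q(x)\approx c_q e^{-x^2/2}$ into \eqref{eq:DEP_general}. The exponents then combine exactly into $\zeta\approx 1-c_q\exp\bigl(-(\hat{\gamma}-\Xi)^2/(2\Sigma)\bigr)$, and $\hat{\gamma}=\Xi$ is read off as the vertex of the quadratic.

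The two routes are easy to compare in your variables. With $t=u/\sqrt{\Sigma}$, $\rho=\Delta/\sqrt{\Sigma}$ and $R=Q/\phi$, the expression of Lemma~\ref{lemma:DEP} is exactly $g=\phi(t)\,R(1/\rho-t)$. The paper's substitution freezes $R$ at the constant $\sqrt{2\pi}\,c_q$, so $\Xi$ comes out exact by construction. The cost is that there is no error control, precisely where the argument of $Q$ is about $1/\rho\gg1$ and the exponential surrogate for $Q$ is loosest.

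By keeping the $t$-dependence of $R$, you gain two things:
\begin{itemize}
\item existence and uniqueness of the minimizer;
\item a sharp error term: solving $R(z)=\rho$ with $z=1/\rho-s$ gives $s^\ast=\rho+O(\rho^3)$, i.e., $\hat{\gamma}^\ast=\Xi+\Delta\,(1+O(\rho^2))$.
\end{itemize}
So the offset is not just $O(\Delta)$ but essentially equal to $\Delta$, the same size as the gap between the two hypotheses' means. You should therefore present \eqref{eq:Opt_gamma} as the leading-order threshold on the $\sqrt{\Sigma}$ scale. What justifies using it in Lemma~\ref{lemma:Min_DEP} is that $g$ at $t=\rho$ exceeds $g$ at $t=0$ only by a factor $1+O(\rho^2)$.

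In your fallback crossing argument, the exponential's second moment $2\Delta^2$ is what moves the crossing from the pure-shift value $\Delta/2$ to $\Delta$.

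Finally, justify $\rho\ll1$ without going through Lemma~\ref{lemma:Min_DEP}, which itself uses $\gamma^\ast$. Either take it as a regime assumption, as the paper does in proving Lemma~\ref{lemma:Min_DEP}, or derive it directly from $g|_{t=0}=R(1/\rho)/\sqrt{2\pi}\le\max g=1-\zeta_{\min}\le\epsilon$.
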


\begin{proof}
Starting from \eqref{eq:DEP_general}, the DEP is approximated as
\begin{equation}
  \zeta (\gamma) \approx
  1 - c_q \exp\left( \underbrace{  -\frac{1}{2\Sigma} \left( \hat\gamma^2-2\Xi\hat\gamma + \Xi^2 \right)}_{(a)}  \right), \label{eq:ApproxDEP}
\end{equation}
where the approximation follows from the fact that the $Q$-function can be approximated by an exponential function, i.e., $Q(x)\approx c_q e^{-x^2/2}$ \cite{Lee23Multi}. It is observed that the DEP expression in \eqref{eq:ApproxDEP} is an exponential function of a quadratic term in $\hat{\gamma}$. Since the exponential function is monotonically increasing function and the quadratic function with a negative quadratic coefficient, i.e., the term (a) in \eqref{eq:ApproxDEP}, attains its maximum value at the vertex $\hat{\gamma} = \Xi$, the minimum DEP is achieved at $\hat{\gamma}^\ast = \Xi$. Based on the definitions of $\hat{\gamma}$ and $\Xi$, the optimal detection threshold is given in \eqref{eq:Opt_gamma}.
\end{proof}

Using the optimal detection threshold in \eqref{eq:Opt_gamma}, the minimum DEP is obtained as presented in Lemma \ref{lemma:Min_DEP}.
\begin{lemma}\label{lemma:Min_DEP}
The minimum DEP at Willie is given by
\begin{equation}\label{eq:Opt_DEP}
  \zeta_{\min}
  = 1 - \left( \sqrt{\pi}
  \left(
  \sqrt{\tfrac{\Sigma}{2\Delta^2}}
  + 
  \sqrt{\tfrac{\Sigma}{2\Delta^2} + \tfrac{4}{\pi}}
  \right) \right)^{-1}.
\end{equation}
\end{lemma}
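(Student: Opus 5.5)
The plan is to substitute the optimal threshold of Lemma~\ref{lemma:gamma} back into the DEP expression \eqref{eq:DEP_general} of Lemma~\ref{lemma:DEP}. I will use the exact $Q$-function there, not the exponential surrogate $Q(x)\approx c_q e^{-x^2/2}$ that served only to locate the vertex. Then I will approximate the resulting Mills-ratio-type quantity by a standard rational-radical bound on the complementary error function.

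\textbf{Step 1.} By Lemma~\ref{lemma:gamma}, $\hat\gamma^\ast=\Xi$, so the terms $\Delta(\hat\gamma-\Xi)$ in \eqref{eq:DEP_general} vanish. This leaves
\begin{equation*}
\zeta_{\min}\approx 1-\exp\!\left(\frac{\Sigma}{2\Delta^2}\right) Q\!\left(\frac{\sqrt{\Sigma}}{\Delta}\right),
\end{equation*}
because $-(-\Sigma)/(\sqrt{\Sigma}\Delta)=\sqrt{\Sigma}/\Delta$.

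\textbf{Step 2.} Set $x\triangleq\sqrt{\Sigma}/\Delta$ and $y\triangleq x/\sqrt{2}$, so that $y^2=\Sigma/(2\Delta^2)$. Using $Q(x)=\tfrac12\operatorname{erfc}(x/\sqrt2)$, the subtracted term becomes
\begin{equation*}
e^{x^2/2}Q(x)=\tfrac12 e^{y^2}\operatorname{erfc}(y).
\end{equation*}

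\textbf{Step 3.} Invoke the classical inequality (Abramowitz--Stegun 7.1.13)
\begin{equation*}
\frac{2}{y+\sqrt{y^2+2}}\le \sqrt{\pi}\,e^{y^2}\operatorname{erfc}(y)\le \frac{2}{y+\sqrt{y^2+4/\pi}}, \qquad y\ge 0.
\end{equation*}
I will adopt the right-hand side as the approximation. The paper states \eqref{eq:Opt_DEP} with an equality sign, although Lemma~\ref{lemma:DEP} is itself only an approximation, so I read \eqref{eq:Opt_DEP} as holding in the same approximate sense. This choice gives
\begin{equation*}
\tfrac12 e^{y^2}\operatorname{erfc}(y)\approx\Bigl(\sqrt{\pi}\bigl(y+\sqrt{y^2+4/\pi}\bigr)\Bigr)^{-1},
\end{equation*}
and substituting $y^2=\Sigma/(2\Delta^2)$ yields \eqref{eq:Opt_DEP} directly.

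\textbf{Step 4 (sanity checks).} At $y=0$, that is, no interference variance, the approximation is exact and gives $\zeta_{\min}=1/2$. As $y\to\infty$, it matches the asymptotic $1/(2\sqrt{\pi}y)$, so $\zeta_{\min}\to 1$ as $\Sigma/\Delta^2$ grows. Both behaviours agree with intuition.

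The main obstacle is justifying Step 3, i.e., choosing the closed-form surrogate for $e^{y^2}\operatorname{erfc}(y)$ that reproduces the exact constant $4/\pi$. I would justify it by citing the known upper bound and its tightness at both ends ($y=0$ and $y\to\infty$), instead of re-deriving it. If a self-contained argument is wanted, I would verify the bound via monotonicity of $g(y)=e^{y^2}\operatorname{erfc}(y)\bigl(y+\sqrt{y^2+4/\pi}\bigr)$ using $g'$ and the identity $\tfrac{d}{dy}\operatorname{erfc}(y)=-\tfrac{2}{\sqrt\pi}e^{-y^2}$.
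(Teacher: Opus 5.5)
Your proposal is correct and is essentially the paper's proof: you substitute $\hat\gamma^\ast=\Xi$ into \eqref{eq:DEP_general}, rewrite $Q$ via $\operatorname{erfc}$, and replace $e^{y^2}\operatorname{erfc}(y)$ by $2/\bigl(\sqrt{\pi}\,(y+\sqrt{y^2+4/\pi})\bigr)$; the paper only calls this accurate for $y\gtrsim 1$, whereas you identify it as the Abramowitz--Stegun upper bound, which is a small bonus since it shows \eqref{eq:Opt_DEP} never exceeds the pre-approximation value $1-\tfrac12 e^{y^2}\operatorname{erfc}(y)$. The one slip is in Step 4: $\zeta_{\min}=1/2$ at $\Sigma=0$ does not agree with intuition, because with no interference Willie's $\mathcal{H}_0$ statistic is deterministic and he detects perfectly, so that value is an artifact of pinning the threshold at $\Xi$, which Lemma~\ref{lemma:gamma} supports only when $\Sigma\gg\Delta^2$ (the only regime in which \eqref{eq:Opt_DEP} should be read).
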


\begin{proof}
By substituting the optimal detection threshold $\gamma^\ast$ in \eqref{eq:Opt_gamma} into the DEP in \eqref{eq:DEP_general}, we obtain
\begin{align} 
    \zeta_{\min} &= 1-\exp\!\left(\frac{\Sigma}{2\Delta^2}\right)Q\!\left(\frac{\sqrt{\Sigma}}{\Delta}\right) \nonumber\\
    &  = 1 - \frac{1}{2}\exp\!\left(\frac{\Sigma}{2\Delta^2}\right)\mathrm{erfc}\!\left(\frac{\sqrt{\Sigma}}{\sqrt{2}\,\Delta}\right) \nonumber \\
    & \mathop{\approx}_{{\mathrm{(a)}}} 1 - \frac{1}{2}\exp \left(\frac{\Sigma}{2\Delta^2} \right)\frac{2\exp \left( -\frac{\Sigma}{2\Delta^2} \right) }{\sqrt{\pi}\left(\sqrt{\frac{\Sigma}{2\Delta^2}} + \sqrt{\frac{\Sigma}{2\Delta^2} + \frac{4}{\pi}} \right)} \nonumber \\
    & = 1-\left( \sqrt{\pi}\left(\sqrt{\frac{\Sigma}{2\Delta^2}} + \sqrt{\frac{\Sigma}{2\Delta^2} + \frac{4}{\pi}} \right) \right) ^{-1} ,
\end{align}
where $\mathrm{erfc}(\cdot)$ indicates the complementary error function. The approximation in (a) follows from the fact that 
$\operatorname{erfc}(x) \approx 2/\sqrt{\pi} \cdot \operatorname{exp}(-x^2)/(x + \sqrt{x^2 + 4/\pi})$ is highly accurate for $x \gtrsim 1$ \cite{Whittaker90ACourse}. In our setting, the argument of the complementary error function is $\Sigma / 2\Delta^2$, and since the aggregate interference power $\Sigma$ from the cooperative users is typically much larger than that of Alice's power $\Delta$, this approximation is well justified. 
\end{proof}

As discussed in Remark \ref{remark:Determination_K}, the number of cooperative users $K$ is determined as the minimum integer value satisfying the covert constraint $\zeta_{\min} \ge 1-\epsilon$. Based on this discussion, the minimum number of cooperative users, denoted by $K_{\min}$, is derived in Theorem~\ref{Theorem:K}.

\begin{theorem}\label{Theorem:K}
The minimum number of cooperative users is given by
\begin{equation}\label{eq:K1_exact_approx}
  \displaystyle K_{\min}= \left\lceil \frac{M(E+V)}{2V}\left( 1 -\sqrt{ 1 -\frac{4V}{CM(E+V)^2}}\right) \right\rceil,
\end{equation}
where $C=P_{\max}^2 / \left(P_a^2 \lambda^2_{a,w} c_\epsilon \right)$ with $c_\epsilon = (1/\epsilon^2 - 8 + 16\epsilon^2) / 2\pi$. In addition, $E$ means the empirical mean of $\lambda_{m,w}^2$, which is defined by
\begin{equation}
 E \triangleq \mathbb{E}[\lambda_w^2] = \frac{1}{M}\sum_{m=1}^{M}\lambda_{m,w}^2,    
\end{equation}
and $V$ represents the empirical variance of $\lambda_{m,w}$, which is expressed by 
\begin{equation}
    V \triangleq \mathrm{Var}[\lambda_w] = \frac{1}{M}\sum_{m=1}^{M}\!\left(\lambda_{m,w}-\frac{1}{M}\sum_{n=1}^{M}\lambda_{n,w}\right)^{2}.
\end{equation}

\end{theorem}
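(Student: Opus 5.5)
The plan is to turn the covert constraint $\zeta_{\min}\ge 1-\epsilon$, with $\zeta_{\min}$ given in closed form by Lemma~\ref{lemma:Min_DEP}, into an explicit inequality in $K$. The steps are: recast that inequality as a lower bound on $\Sigma/\Delta^2$; use the explicit form of $\Sigma$ from Lemma~\ref{lemma:DEP} to get a quadratic inequality in $K$; take the smallest admissible root. Remark~\ref{remark:Determination_K} then gives $K_{\min}$ as the ceiling of that root.

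\textbf{Step 1: reduce the constraint to a bound on $\Sigma/\Delta^2$.} Write $s \triangleq \Sigma/(2\Delta^2)$ and $A\triangleq 1/(\epsilon\sqrt{\pi})$. By \eqref{eq:Opt_DEP}, $\zeta_{\min}\ge 1-\epsilon$ is equivalent to
\[
\sqrt{s}+\sqrt{s+4/\pi}\ \ge\ A .
\]
The left side is strictly increasing in $s$, so this is a threshold condition on $s$. To find the threshold, move $\sqrt{s}$ to the right and square; this is legitimate in the relevant regime $A>\sqrt{s}$. The $s$ terms cancel, leaving $2A\sqrt{s}\ge A^2-4/\pi$. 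Squaring once more (for small $\epsilon$ the right side is positive) gives
\[
s\ \ge\ \frac{(A^2-4/\pi)^2}{4A^2}.
\]
Substituting $A^2=1/(\pi\epsilon^2)$, the right side equals $\tfrac{1}{4\pi}\bigl(1/\epsilon^2-8+16\epsilon^2\bigr)=c_\epsilon/2$. Hence the constraint is exactly $\Sigma/\Delta^2\ge c_\epsilon$.

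\textbf{Step 2: rewrite it as a quadratic inequality in $K$.} Insert $\Sigma$ from Lemma~\ref{lemma:DEP}, using $\frac{1}{M}\sum_m\lambda_{m,w}^2=E$ and the empirical variance $V$. This gives
\[
P_{\max}^2\Bigl[KE+\tfrac{K(M-K)}{M-1}V\Bigr]\ \ge\ c_\epsilon P_a^2\lambda_{a,w}^2 ,
\]
whose right-hand side, after dividing by $P_{\max}^2$, is $1/C$. In the large-$M$ regime already assumed in Proposition~\ref{prop:power_profile}, replace $M-1$ by $M$. The condition becomes
\[
\frac{V}{M}K^2-(E+V)K+\frac1C\ \le\ 0 .
\]

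\textbf{Step 3: select the correct root.} I expect this to be the main point needing care, namely justifying that the smaller root is the relevant one. The left side $g(K)=KE+K(M-K)V/M$ is a concave parabola with vertex at $K_v=M(E+V)/(2V)$. Since $E=V+\bigl(\frac1M\sum_n\lambda_{n,w}\bigr)^2\ge V$, we have $K_v\ge M$. So $g$ is increasing on the whole feasible range $0\le K\le M$, and the feasible set of $K$ is $\{K\ge K_-\}$, where $K_-$ is the smaller root of the quadratic:
\[
K_-=\frac{M(E+V)}{2V}\Bigl(1-\sqrt{1-\tfrac{4V}{CM(E+V)^2}}\Bigr).
\]
Because $K$ must be an integer, the minimum is $K_{\min}=\lceil K_-\rceil$, which is \eqref{eq:K1_exact_approx}.

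Finally, the square root must be real, i.e.\ $CM(E+V)^2\ge 4V$. This is exactly the condition that the constraint can be met by some $K\le K_v$; I would note it as the feasibility condition of the theorem.
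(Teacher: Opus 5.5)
Your proof is correct and follows the paper's route: convert $\zeta_{\min}\ge 1-\epsilon$ into $\Sigma/\Delta^2\ge c_\epsilon$, substitute $\Sigma$ from Lemma~\ref{lemma:DEP}, replace $M-1$ by $M$, and take the ceiling of the smaller root of the resulting quadratic in $K$. Your observation that $E\ge V$ puts the vertex at $K_v\ge M$ makes the feasible set in $[0,M]$ exactly $[K_-,M]$, and this justifies the root choice that the paper only asserts. One small refinement: a real root alone does not ensure $K_-\le M$, so the true feasibility condition on $[0,M]$ is $CME\ge 1$.
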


\begin{proof}
See Appendix \ref{Append:ProofMinK}
\end{proof}

Although the expression for $K_{\min}$ in \eqref{eq:K1_exact_approx} has a simple analytical form, it is somewhat challenging to directly observe how the communication parameters affect it. For this reason, we further analyze two special cases: 
1) the asymptotic case, i.e., $M \gg 1$, in Corollary~\ref{Corollary:K_approx}, and 
2) the homogeneous user deployment case, i.e., $\lambda_{m,w} = \lambda_{n,w}$ for $m \neq n$, in Corollary~\ref{cor:homogeneous}.

\begin{corollary}\label{Corollary:K_approx}
For $M\gg1$, $K_{\min}$ is approximated as
\begin{equation}\label{eq:K_simple_approx}
  \displaystyle K_{\min} \approx 
  \left\lceil\frac{1}{C(E+V)}\right\rceil
  =\left\lceil
  \frac{(P_a\lambda_{a,w})^2 c_\epsilon}
  {P_{\max}^2\bigl(\mathbb{E}[\lambda_w^2]+\mathrm{Var}[\lambda_w]\bigr)}
  \right\rceil.
\end{equation}
\end{corollary}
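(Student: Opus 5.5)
Starting from the closed form of $K_{\min}$ in Theorem~\ref{Theorem:K}, I would Taylor-expand the square root for large $M$. Set
\begin{equation*}
x \triangleq \frac{4V}{CM(E+V)^2}.
\end{equation*}
Since $C$, $E$ and $V$ are empirical quantities that stay bounded (and do not scale with $M$) as $M$ grows, we have $x=\mathcal{O}(1/M)\to 0$ for $M\gg1$. This legitimizes the expansion $1-\sqrt{1-x} = \tfrac{x}{2} + \tfrac{x^2}{8} + \mathcal{O}(x^3)$, and I would keep only the first-order term.

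Substituting this term into the argument of the ceiling gives
\begin{equation*}
\frac{M(E+V)}{2V}\cdot\frac{x}{2} = \frac{M(E+V)}{2V}\cdot\frac{2V}{CM(E+V)^2} = \frac{1}{C(E+V)}.
\end{equation*}
Both $M$ and $V$ cancel exactly, so the leading term is independent of $M$. The next correction is
\begin{equation*}
\frac{M(E+V)}{2V}\cdot\frac{x^2}{8} = \frac{V}{C^2M(E+V)^3} = \mathcal{O}(1/M),
\end{equation*}
which is negligible relative to the leading term and so does not affect the ceiling for large $M$. Finally, inserting the definition $C=P_{\max}^2/(P_a^2\lambda_{a,w}^2c_\epsilon)$ gives the second expression in \eqref{eq:K_simple_approx}, with $E$ and $V$ identified with $\mathbb{E}[\lambda_w^2]$ and $\mathrm{Var}[\lambda_w]$.

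The main obstacle is justifying that $x$ is indeed small. Two points need care. First, the case $V\to0$ (homogeneous users) makes the prefactor $M(E+V)/(2V)$ blow up. The expansion is still valid here, because the product of the prefactor and $x$ is finite, and the limit reproduces the same $1/(CE)$. Second, the correction term $V/(C^2M(E+V)^3)$ must be small, which holds when $M$ dominates $1/C$. Equivalently, the required $K$ must be much smaller than $M$, which is implicit in the requirement that the $K_{\min}$ of Theorem~\ref{Theorem:K} be real and attainable. I would state this regime explicitly.
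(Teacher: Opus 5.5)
Your proposal is correct and follows essentially the same route as the paper: a first-order binomial expansion $\sqrt{1-x}\approx 1-x/2$ with $x=4V/(CM(E+V)^2)=\mathcal{O}(1/M)$, after which $M$ and $V$ cancel to leave $1/(C(E+V))$. Your second-order correction $V/(C^2M(E+V)^3)$ and your remark on the $V\to 0$ limit add rigor that the paper leaves out. One caveat: your claim that the correction ``does not affect the ceiling'' is too strong. An $\mathcal{O}(1/M)$ term can still move the ceiling when $1/(C(E+V))$ lies just below an integer, which is why the result is stated only as an approximation.
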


\begin{proof}
Let us focus on the square-root term in \eqref{eq:K1_exact_approx}. By applying the first-order binomial expansion, i.e., $\sqrt{1-x} \approx 1 - x/2$, the square-root term can be approximated as
    \begin{equation}
        \sqrt{ 1 - \frac{4V}{CM(E+V)^2}} \approx 1-\frac{2V}{CM(E+V)^2}. \label{eq:MinKAsymp}
    \end{equation}
Note that the first-order binomial expansion is accurate when $x$ is small. 
In this work, $x$ corresponds to the term $4V / (CM(E + V)^2)$. When $M$ increases, the values of $C$, $E$, and $V$ do not scale with $M$, so the term $4V / (CM(E + V)^2)$ indeed becomes small for sufficiently large $M$. This implies that when $M$ is sufficiently large, the approximation in \eqref{eq:MinKAsymp} is justified. Then, by substituting the approximation in \eqref{eq:MinKAsymp} into \eqref{eq:K1_exact_approx}, and performing some straightforward mathematical manipulations, we obtain the expression of $K_{\min}$ presented in \eqref{eq:K_simple_approx}.
\end{proof}

From the approximation of $K_{\rm{min}}$ presented in \eqref{eq:K_simple_approx}, we can obtain some insight into how the primary factors influence the required $K_{\min}$. In particular, the result indicates that $K_{\min}$ is inversely proportional to $\mathbb{E}[\lambda_w^2] + \mathrm{Var}[\lambda_w]$, and increases quadratically with $P_a$. In addition, $K_{\min}$ increases on the order of $\epsilon^{-2}$ as $\epsilon$ becomes small, because $c_{\epsilon}$ itself scales as $\mathcal{O}(\epsilon^{-2})$ as $\epsilon \to 0$. This implies that as the covert constraint becomes tighter, that is, as $\epsilon$ decreases, the minimum number of cooperative users increases on the order of the inverse square of $\epsilon$. All these analyses are confirmed in Section \ref{Sec:Impactofepsilon}.

\begin{corollary}\label{cor:homogeneous}
Under homogeneous user deployment, where all non-covert users are located at approximately the same distance from Willie, i.e., $\lambda_{m,w} \approx \bar{\lambda}_w$ for all $m$ and thus $\mathrm{Var}[\lambda_w]\!\approx\!0$, but the Alice--Willie channel gain $\lambda_{a,w}$ may differ from $\bar{\lambda}_w$, the approximation expression of $K_{\rm min}$ in Theorem \ref{Theorem:K} simplifies to
\begin{equation}
K_{\min} \approx \left\lceil
\frac{P_a^2 c_\epsilon}{P_{\max}^2}
\left( \frac{\lambda_{a,w}}{\bar{\lambda}_w} \right)^{\!2}
\right\rceil.
\label{eq:Kmin_simplified}
\end{equation}
\end{corollary}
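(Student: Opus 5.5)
Specializing Theorem~\ref{Theorem:K} directly is the plan. Since $\lambda_{m,w}\approx\bar{\lambda}_w$ for all $m$, the empirical quantities become $E=\frac{1}{M}\sum_{m}\lambda_{m,w}^2\approx\bar{\lambda}_w^2$ and $V\approx 0$. The main subtlety is that setting $V=0$ directly in \eqref{eq:K1_exact_approx} gives the indeterminate form $0\cdot\infty$: the prefactor $M(E+V)/(2V)$ diverges while $1-\sqrt{1-4V/(CM(E+V)^2)}$ vanishes. So the limit $V\to 0$ must be taken carefully rather than by naive substitution.

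To resolve this, expand the square root to first order exactly as in the proof of Corollary~\ref{Corollary:K_approx}, i.e.,
\begin{equation*}
1-\sqrt{1-\tfrac{4V}{CM(E+V)^2}} \approx \tfrac{2V}{CM(E+V)^2}.
\end{equation*}
This is now justified by $V\to 0$ rather than by $M\gg1$, since the argument of the square root tends to $1$. Multiplying by the prefactor $M(E+V)/(2V)$ cancels the factors $V$ and $M$ and yields $K_{\min}\approx\lceil 1/(C(E+V))\rceil$. Equivalently, one can invoke Corollary~\ref{Corollary:K_approx} directly, noting that its approximation error vanishes as $V\to0$ for any $M$.

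A cross-check confirms the same answer without any expansion. With $V=0$, the expression for $\Sigma$ in Lemma~\ref{lemma:DEP} reduces to $\Sigma=P_{\max}^2 K\bar{\lambda}_w^2$, which is linear in $K$. The covert condition from Lemma~\ref{lemma:Min_DEP} then becomes linear in $K$ and gives the same threshold directly.

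Finally, substitute $E+V\approx\bar{\lambda}_w^2$ and $C=P_{\max}^2/(P_a^2\lambda_{a,w}^2c_\epsilon)$. This gives
\begin{equation*}
\frac{1}{C\bar{\lambda}_w^2}=\frac{P_a^2 c_\epsilon}{P_{\max}^2}\left(\frac{\lambda_{a,w}}{\bar{\lambda}_w}\right)^2,
\end{equation*}
and applying the ceiling yields \eqref{eq:Kmin_simplified}. Here $\lambda_{a,w}$ is left free throughout, consistent with the statement's allowance that it may differ from $\bar{\lambda}_w$. The only step requiring care is the limiting argument in the second paragraph.
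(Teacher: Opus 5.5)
Your proposal is correct and follows essentially the same route as the paper. You apply the first-order binomial expansion $\sqrt{1-x}\approx 1-x/2$ in Theorem~\ref{Theorem:K}, justified by $V$ being small, to get $1/(C(E+V))$, then use $V\ll E$ with $E=\bar{\lambda}_w^2$ and substitute $C$. Your explicit handling of the $0\cdot\infty$ form and the direct cross-check via $\Sigma=P_{\max}^2K\bar{\lambda}_w^2$ in Lemma~\ref{lemma:Min_DEP} are sound additions that the paper leaves implicit.
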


\begin{proof}
For simplicity, the ceiling function in \eqref{eq:K1_exact_approx} is omitted. Under homogeneous user deployment, the term $V$ is very small; thus, using the first-order binomial approximation $\sqrt{1-x}\approx 1-x/2$ for $x = 4V/(CM(E+V)^2)$ in Theorem~\ref{Theorem:K} yields a highly accurate approximation. After some straightforward algebraic manipulation, $K_{\min}$ can be written as
\begin{equation}
K_{\min} \approx \frac{1}{C(E+V)} \mathop{\approx}_{{\mathrm{(a)}}} \frac{1}{CE},
\end{equation}
where (a) follows from $V \ll E$ in the homogeneous case.  Finally, substituting $C = P_{\max}^2 / \bigl( (P_a\lambda_{a,w})^2 c_\epsilon \bigr)$ and applying 
$E=\bar{\lambda}_{w}^{\,2}$, completes the proof.
\end{proof}
In this case, the minimum number of cooperative users is independent of the number of non-covert user $M$. In addition, when $\lambda_{a,w} = \bar{\lambda}_w$, this result directly reduces to the result shown in previous works \cite{Lee23Multi, Yeom24Covert}, demonstrating that the proposed analysis encompasses these prior studies as special cases and therefore provides a more general characterization of the system.

By utilizing the closed-form expression for $K_{\min}$ in Theorem \ref{Theorem:K}, the corresponding optimal activation threshold of the on-off scheme in Proposition \ref{prop:power_profile} can be readily derived, as presented in Theorem \ref{theorem:tau}.

\begin{theorem}\label{theorem:tau}
The optimal activation threshold $\tau^*$ is determined by
\begin{equation}
    \tau^* = \frac{|h^{a}_{K_{\min},b}|^{2}}{\lambda_{K_{\min},w}}.
\end{equation} 
\end{theorem}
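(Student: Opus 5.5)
The plan is to combine Proposition~\ref{prop:power_profile}, Remark~\ref{remark:Determination_K}, and Theorem~\ref{Theorem:K}, using the ordering of users by the ratio $r_m \triangleq |h^a_{m,b}|^2/\lambda_{m,w}$. First, we relabel the non-covert users so that $r_{m_1}\le r_{m_2}\le\cdots\le r_{m_M}$, matching the indexing introduced after \eqref{eq:test_statistic2}; in the statement of Theorem~\ref{theorem:tau}, the subscript $K_{\min}$ refers to this sorted index. For any threshold $\tau$, the on--off rule of Proposition~\ref{prop:power_profile} activates exactly the users with $r_m\le\tau$. Assuming distinct ratios (which holds almost surely since the $h^a_{m,b}$ are continuous random variables), the set of active users is therefore $\{m_1,\dots,m_K\}$, where $K(\tau)=|\{m: r_m\le\tau\}|$. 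This $K(\tau)$ is a nondecreasing, right-continuous step function that jumps by one exactly at the points $\tau=r_{m_k}$.

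Next, we express the covert constraint in terms of $K$. By Lemma~\ref{lemma:Min_DEP}, $\zeta_{\min}$ depends on $\tau$ only through $\Sigma$, hence only through $K$. From Lemma~\ref{lemma:DEP}, $\Sigma$ is increasing in $K$ on the relevant range, as is implicit in the derivation of Theorem~\ref{Theorem:K}. The constraint $\zeta_{\min}\ge 1-\epsilon$ is therefore equivalent to $K(\tau)\ge K_{\min}$, with $K_{\min}$ given by \eqref{eq:K1_exact_approx}.

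Then we use the rate monotonicity. Raising $\tau$ only adds interferers, each contributing the nonnegative term $P_{\max}|h^a_{m,b}|^2$ to the denominator of \eqref{eq:rate_def}. Hence $R$ is nonincreasing in $\tau$ for fixed $P_a$, and the optimal $\tau$ is the smallest value satisfying $K(\tau)\ge K_{\min}$, as argued in Remark~\ref{remark:Determination_K}. Because $K(\tau)$ reaches the value $K_{\min}$ exactly at $\tau=r_{m_{K_{\min}}}$ and stays below it for smaller $\tau$, the infimum is attained there. This gives $\tau^*=|h^a_{m_{K_{\min}},b}|^2/\lambda_{m_{K_{\min}},w}$, which is the claimed expression in the sorted labeling.

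The main subtlety is the monotonicity of $\Sigma$, and hence of $\zeta_{\min}$, in $K$, since the variance term $K(M-K)/(M-1)$ is not monotone. I would verify it by differentiating $\Sigma/P_{\max}^2=KE'+K(M-K)V/(M-1)$, where $E'=\frac{1}{M}\sum_m\lambda_{m,w}^2$. Alternatively, I would simply note that Theorem~\ref{Theorem:K} selects the smaller root of the quadratic, so every $K$ between $K_{\min}$ and the larger root satisfies the constraint. Any $\tau$ that preserves feasibility while minimizing interference then gives $K=K_{\min}$. Ties among the $r_m$ occur with probability zero and can be ignored.
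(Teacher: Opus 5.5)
Your proposal is correct and follows the same route as the paper. The paper simply observes that placing $\tau$ at the $K_{\min}$-th smallest ratio $|h^a_{m,b}|^2/\lambda_{m,w}$ activates exactly $K_{\min}$ users; you additionally spell out the facts it leaves to Remark~\ref{remark:Determination_K} (rate nonincreasing in $\tau$, feasibility iff $K\ge K_{\min}$), and the monotonicity you flag does hold, since $\Sigma(K+1)-\Sigma(K)=P_{\max}^2\bigl(E+\tfrac{M-2K-1}{M-1}V\bigr)\ge P_{\max}^2(E-V)>0$ for $0\le K\le M-1$.
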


\begin{proof}
By setting $\tau$ to the ratio $|h^{a}_{K_{\min},b}|^{2}/\lambda_{K_{\min},w}$, the $K_{\min}$-th weakest user in terms of $|h^{a}_{m,b}|^{2} / \lambda_{m,w}$ becomes the activation boundary, ensuring that exactly $K_{\min}$ cooperative users participate, where $K_{\min}$ is given in \eqref{eq:K1_exact_approx}.
\end{proof}

\section{Optimization of System Parameters}\label{Sec:Optimization}

In this section, we first reformulate the optimization problem in \eqref{eq:reformul_opt}, which involves two optimization variables, into a simpler form with only a single optimization variable based on Theorems~\ref{Theorem:K} and~\ref{theorem:tau}. Then, we propose an algorithm that efficiently solves the reformulated optimization problem.

\subsection{Reformulation of the Optimization Problem}

As discussed in Section \ref{Sec:On-OffScheme}, adopting the on–off scheme in Proposition \ref{prop:power_profile} allows the original problem in \eqref{eq:original_opt} to be reformulated as \eqref{eq:reformul_opt}, reducing the number of optimization variables from $M+1$, i.e., $P_a, P_1, \ldots, P_M$, to two, $P_a$ and $\tau$. For a given $P_a$, Theorems \ref{Theorem:K} and \ref{theorem:tau} characterize the optimal number of cooperative users and the corresponding activation threshold. Consequently, the problem in \eqref{eq:reformul_opt} can be equivalently rewritten as

\begin{equation} \label{eq:master_opt}
    P_a^* = \argmax_{0 \le P_a \le P_{\max}} R(P_a,K_{\rm{min}}),
\end{equation}

where 
\begin{equation}
   \textstyle R(P_a,K_{\rm{min}}) = \log_2\left(1 + \frac{P_a |h^a_{a,b}|^2}{\sum_{m=1}^{K_{\rm{min}}} P_{\rm{max}} |h^a_{m_i,b}|^2 + \sigma_b^2}\right). \label{eq:Rate_OnOff}
\end{equation}
This optimization problem has only one optimization variable, i.e., $P_a$, and is therefore much easier to solve than the original formulation of \eqref{eq:original_opt}, which involves $M+1$ optimization variables.

\subsection{Optimization of Transmit Power for Alice}
Notably, as shown in Theorem \ref{Theorem:K} and Corollaries \ref{Corollary:K_approx} and \ref{cor:homogeneous}, $K_{\min}$ is a function of $P_a$. In particular, Corollaries \ref{Corollary:K_approx} and \ref{cor:homogeneous} explicitly demonstrates that $K_{\min}$ increases with $P_a$. Intuitively, a larger $P_a$ requires stronger interference, that is, a greater $K_{\min}$, to satisfy the covert constraint. Therefore, the optimal $P_a$ in \eqref{eq:master_opt} is not simply the maximum allowable value $P_{\max}$. In this work, we propose an algorithm that determines the optimal $P_a$ through a piecewise search for the optimization problem in \eqref{eq:master_opt}.

Meanwhile, it is noteworthy that $K_{\min}$ in \eqref{eq:K1_exact_approx} includes a ceiling function. As a result, $K_{\min}$ remains unchanged over certain ranges of $P_a$. In such regime where $K_{\min}$ is constant, the optimal $P_a$ is the maximum value within that range. The maximum value of $P_a$ corresponding to a given $K_{\min}=\bar K$ can be obtained from Theorem \ref{Theorem:K} as

\begin{equation}
\label{eq:Pa_of_K_exact_opt}
P_a(\bar K)=\frac{P_{\max}}{\sqrt{c_\epsilon}\,\lambda_{a,w}}
\sqrt{(E+V)\bar K-\frac{V}{M}(\bar K)^2}.
\end{equation}
Then, for a given $K_{\min}=\bar K$, the optimal $P_a$ is defined as $P_{\bar K} = \min\big(P_{\max},\, P_a(\bar K)\big)$ for $\bar K \in \{0,\ldots,M\}$. Thus, when performing a one-dimensional search for the optimal $P_a$, it is sufficient to partition the feasible interval of $P_a$ according to the values of $P_{\bar K}$, which naturally leads to a piecewise search structure. This piecewise characterization eliminates the need to explore redundant regions of the search space, making the resulting piecewise search far more efficient and structure-aware than a conventional exhaustive one-dimensional search. The overall piecewise search is summarized in Algorithm \ref{alg:piecewise}.

\begin{algorithm}[t] 
\caption{Piecewise Search} 
\ifCLASSOPTIONonecolumn\small\fi
\begin{algorithmic}[1] \label{alg:piecewise}
\renewcommand{\algorithmicrequire}{\textbf{Input:}}
\renewcommand{\algorithmicensure}{\textbf{Output:}}
\REQUIRE  $h^a_{a,b}$, $\{h^a_{m,b}\}_{m=1}^{M}$, $\{\lambda_{m,w}\}_{m=1}^{M}$, $P_{\max}$, $\epsilon$, $\sigma_b^{2}$
\ENSURE $P_a^{*}$, $K^{*}$, $\tau^{*}$
\STATE Compute $C$, $E$, $V$, and $c_\epsilon$, respectively.
\STATE Set $R_{\text{temp}} \leftarrow 0$.
\FOR{$\bar K=0$ to $M$}
  \STATE Compute $P_{\bar K}$ and set $P_a\leftarrow P_{\bar K}$.
  \STATE Based on $K_{\min} = \bar K$, compute $R(P_a,\bar K)$ in \eqref{eq:Rate_OnOff}.
  \IF{$R(P_a, \bar K) > R_{\text{temp}}$}
    \STATE $R_{\text{temp}}\leftarrow R(P_a, \bar K)$, $P_a^{*}\leftarrow P_a$, and $K^{*}\leftarrow K$.
  \ENDIF
\ENDFOR
\STATE Set $\tau^{*}\leftarrow \bigl|h^a_{m_{K^{*}},b}\bigr|^{2}/\lambda_{m_{K^{*}},w}$ and return $P_a^{*}$, $K^{*}$, $\tau^{*}$.
\end{algorithmic}
\end{algorithm}

\section{Numerical Results} \label{Sec:Simulation}

In this section, comprehensive numerical evaluations are carried out to validate the theoretical analyses derived in this work. In addition, we examine the performance of the proposed covert communication network where users are spatially distributed and reveal the interactions among key system parameters.

\subsection{Simulation Setup} \label{Subsec:simulation_setup}

For performance evaluations, we consider the large-scale fading channel model specified in the 3GPP TR 25.996 Urban Macrocell guideline~\cite{Ref_3GPP_TR25996}. The path loss (PL) at a distance $d$ (in meters) is modeled as
\begin{equation}
    \text{PL(dB)} = 34.5 + 35\log_{10}(d),
\end{equation}
which corresponds to the large-scale fading coefficient $\lambda_{x,y} = \beta_0 d_{x,y}^{-\alpha}$. Here, the path-loss exponent is set to 3.5, and the reference channel gain at $d=1\mathrm{m}$ is given by $\beta_0=-34.5 \mathrm{dB}$. The maximum transmit power of each user is set to $P_{\max} = 200\,\mathrm{mW}\,(23\,\mathrm{dBm})$, and the noise variances at Bob and Willie are assumed to be identical, i.e., $\sigma_b^2 = \sigma_w^2 = -102\,\mathrm{dBm}$.

The spatial topology of the network is constructed over a circular region with a radius of approximately $707\,\mathrm{m}$, representing the \textbf{spatially adverse scenario} illustrated in Fig.~\ref{fig:Simulation_Setup}. Willie is placed at the center of the area, $(500,500)$, while Bob is located near the boundary at $(100,100)$. Alice is positioned at $(831,831)$ such that her distance to Willie is $d_{a,w}\approx468\,\mathrm{m}$. This layout intentionally enforces two defining characteristics of this adverse configuration. First, the distance from Alice to Bob is significantly larger than that to Willie ($d_{a,b} \gg d_{a,w}$), forcing Alice to transmit with relatively higher power to sustain a reliable link, which simultaneously strengthens the received signal at the nearby adversary, i.e., Willie. Second, all $M=1000$ non-covert users are uniformly distributed within an annular region satisfying $d_{m,w}>d_{a,w}$ for every $m\in\{1,\dots,M\}$. Consequently, each non-covert users experiences a weaker channel to Willie compared to Alice ($\lambda_{m,w}<\lambda_{a,w}$), minimizing the received interference power and imposing the most unfavorable conditions for covertness. Under this setting, the system must satisfy the covert constraint using only these disadvantaged non-covert users, thereby providing a rigorous and conservative benchmark for the proposed scheme.

\begin{figure}[t]
\centering
\includegraphics[width=\figwidth]{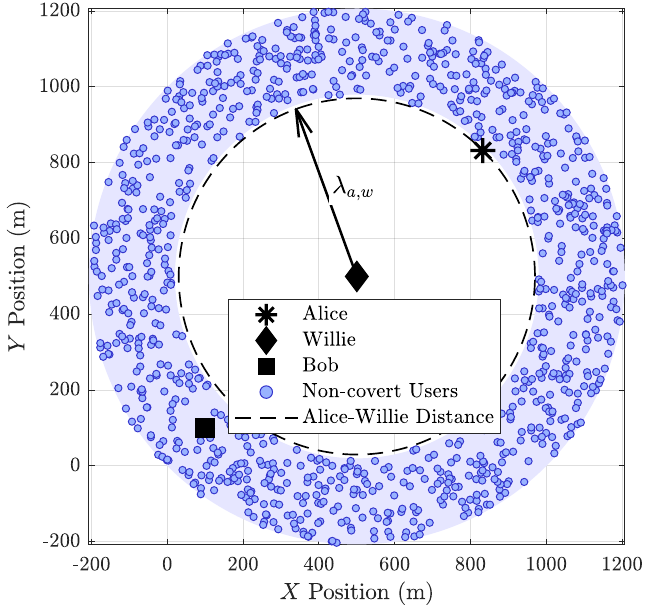}
\caption{Spatial configuration of the spatially adverse scenario.}
\label{fig:Simulation_Setup}
\end{figure}

\subsection{Willie's Detection Performance}

To validate the theoretical results related to Willie’s detection performance presented in Lemmas \ref{lemma:DEP}, \ref{lemma:gamma}, \ref{lemma:Min_DEP}, and Theorem \ref{Theorem:K}, we conduct Monte-Carlo simulations and compare the analytical results with the simulated ones.

\begin{figure}[t]
\centering
\includegraphics[width=\figwidth]{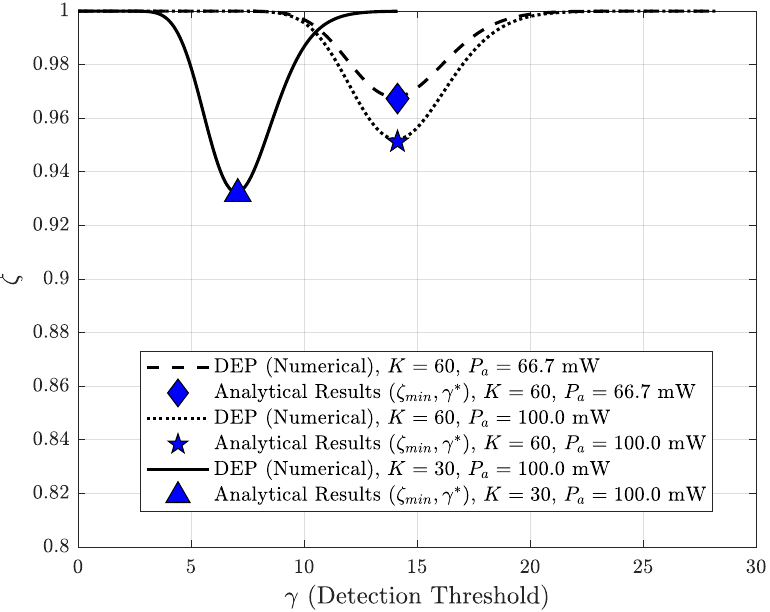}
\caption{DEP $\zeta$ as a function of the detection threshold $\gamma$ for various combinations of $K$ and $P_a$.}
\label{fig:DEP_gamma}
\end{figure}

In Fig. \ref{fig:DEP_gamma}, we present the DEP curves obtained from Monte-Carlo simulations using three different line styles, i.e., dashed, dotted, and solid line, for the combinations of $(K, P_a) = (60, 66.7\,\mathrm{mW})$, $(60, 100\,\mathrm{mW})$, and $(30, 100\,\mathrm{mW})$. The diamond, star, and triangle markers correspond to the theoretical results in Lemmas \ref{lemma:gamma} and \ref{lemma:Min_DEP}. Specifically, the $x$-coordinates represent the optimal detection threshold, $\gamma^*$, derived in Lemma \ref{lemma:gamma} and the $y$-coordinates represent the minimum DEP, $\zeta_{\rm{min}}$, obtained from Lemma~\ref{lemma:Min_DEP}, for the same three $(K, P_a)$ combinations. As observed in Fig. \ref{fig:DEP_gamma}, the theoretical results for $\gamma^{\ast}$ and $\zeta_{\min}$ closely match the Monte-Carlo simulation results for all considered $(K, P_a)$ combinations. This confirms the accuracy of our theoretical expressions in capturing the exact performance. This, in turn, validates the analytical framework established in Lemma \ref{lemma:DEP}, which serves as the fundamental basis for Lemmas \ref{lemma:gamma} and \ref{lemma:Min_DEP}.

Meanwhile, in Fig. \ref{fig:DEP_gamma}, comparing the dotted and solid curves reveals that increasing the number of cooperative users (from $K = 30$ to $K = 60$) leads to a higher minimum DEP. This is attributed to the fact that a larger number of cooperative users generates stronger aggregate interference at Willie, thereby hindering his detection performance. Furthermore, a comparison between the dotted and dashed curves demonstrates that, for a fixed $K$, increasing Alice’s transmit power from $P_a = 66.7\,\mathrm{mW}$ to $100\,\mathrm{mW}$ reduces the DEP across all values of $\gamma$. This occurs because a higher $P_a$ increases the SINR at Willie, which effectively diminishes the impact of the interference and improves his detection capability.

\begin{figure}[t]
\centering
\includegraphics[width=\figwidth]{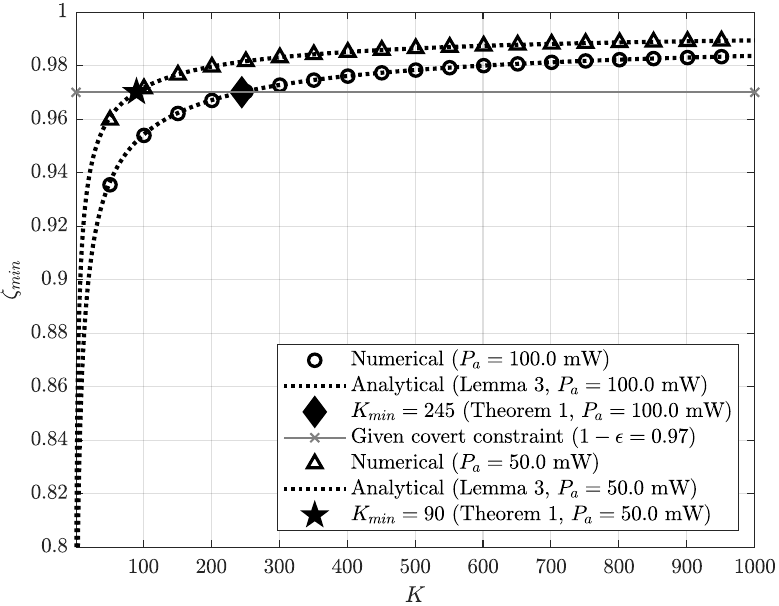}
\caption{Minimum DEP $\zeta_{\min}$ versus the number of cooperative users $K$ for different transmit powers $P_a$.}
\label{fig:MinDEP_vs_K}
\end{figure}

In Fig. \ref{fig:MinDEP_vs_K}, we plot the minimum DEP, $\zeta_{\min}$, as a function of the number of cooperative users $K$ under the on-off scheme in Proposition \ref{prop:power_profile} to validate the theoretical results presented in Lemma \ref{lemma:Min_DEP} and Theorem \ref{Theorem:K}. The open markers, i.e., circle and triangle markers, represent the minimum DEP obtained via Monte-Carlo simulations for $P_a = 100$ and $50~\mathrm{mW}$, respectively. Specifically, the simulation results are determined through an exhaustive search over the detection threshold, $\gamma$. The dotted curves correspond to the theoretical minimum DEP derived in Lemma \ref{lemma:Min_DEP}, showing a precise match with the simulations. Furthermore, the filled star and diamond markers highlight the minimum number of cooperative users required to satisfy the covert constraint $1 - \epsilon = 0.97$, as derived in Theorem \ref{Theorem:K}. The coordinates of the star and diamond markers correspond to the cases of $P_a = 50~\mathrm{mW}$ ($K_{\min} = 90$) and $100~\mathrm{mW}$ ($K_{\min}=245$), respectively. This agreement confirms that Theorem \ref{Theorem:K} serves as a reliable and computationally efficient guideline to determine the necessary number of cooperative users to meet a specified covert constraint. Moreover, a comparison between the star and diamond markers reveals the trade-off between Alice's transmit power and the required cooperative interference resources; specifically, a higher covert transmit power $P_a$ necessitates a larger number of cooperative users $K_{\min}$ to satisfy the same covert constraint, i.e., $1-\epsilon = 0.97$. This is because as $P_a$ increases, Willie's detection capability improves, requiring stronger aggregate interference generated by a large number of users to effectively conceal the signal and satisfy the desired covert constraint.

\subsection{Impact of Covertness Level and Spatial Parameters} \label{Sec:Impactofepsilon}

In this subsection, we investigate the impact of the covert constraint and the spatial heterogeneity of multiple users on the minimum number of cooperative users, $K_{\min}$. In addition, by comparing the theoretical result of this work with that of the existing literature, we validate through comprehensive simulations that the proposed scheme is significantly more efficient than conventional approaches derived from prior work.

\begin{figure}[t]
\centering
\includegraphics[width=\figwidth]{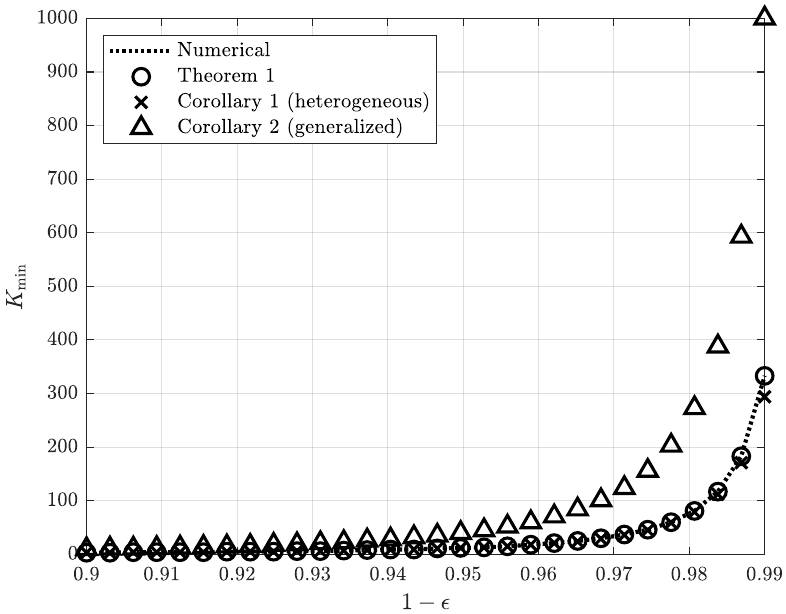}
\caption{Minimum number of cooperative users $K_{\min}$ versus covert constraint $1-\epsilon$ for $P_a = 25~\mathrm{mW}$.}
\label{fig:epsilon_Kmin}
\end{figure}

In Fig. \ref{fig:epsilon_Kmin}, we plot $K_{\min}$ as a function of the covertness parameter, $\epsilon$. The dotted curve represents the values of $K_{\min}$ obtained from Monte-Carlo simulations. The circle, cross, and triangle markers indicate the theoretical results for $K_{\min}$ derived from Theorem \ref{Theorem:K}, Corollaries \ref{Corollary:K_approx}, and \ref{cor:homogeneous}, respectively. Fig. \ref{fig:epsilon_Kmin} illustrates that the theoretical results in Theorem \ref{Theorem:K} closely match the Monte-Carlo simulation results over the entire range of interest, $0.9 \le 1-\epsilon \le 0.99$, which corresponds to the practically relevant region for covert communications. In addition, the close agreement between the cross markers of Corollary \ref{Corollary:K_approx} and the circle markers of Theorem \ref{Theorem:K} confirms that the approximation holds well, indicating that the simulation is conducted in the regime of a sufficiently large number of users $M$. As discussed in Corollary \ref{Corollary:K_approx}, the figure also verifies that as $\epsilon$ decreases, $K_{\min}$ increases according to the inverse-square scaling, i.e., $K_{\min} \propto \mathcal{O}(\epsilon^{-2})$. In particular, for $1-\epsilon = 0.95$, we observe $K_{\min} = 12$, whereas for a stricter constraint of $1-\epsilon = 0.975$, $K_{\min}$ increases to 50. Finally, it is notable that there are significant discrepancies between the exact results (dotted curve) and the homogeneous approximation (triangle markers). In other words, the value of $K_{\min}$ derived under the homogeneous-channel assumption is significantly larger than the exact value. As mentioned earlier, Corollary \ref{cor:homogeneous} generalizes the existing results in \cite{Lee23Multi, Lee24Channel, Yeom24Covert}, which correspond to the special case where $\bar{\lambda}_w = \lambda_{a,w}$. Consequently, this comparison demonstrates that when users are spatially distributed, applying such homogeneous-based analyses leads to a substantial overestimation of $K_{\min}$, thereby failing to provide a proper estimation that reflects the critical characteristics of spatial heterogeneity.

\begin{figure}[t]
\centering
\includegraphics[width=\figwidth]{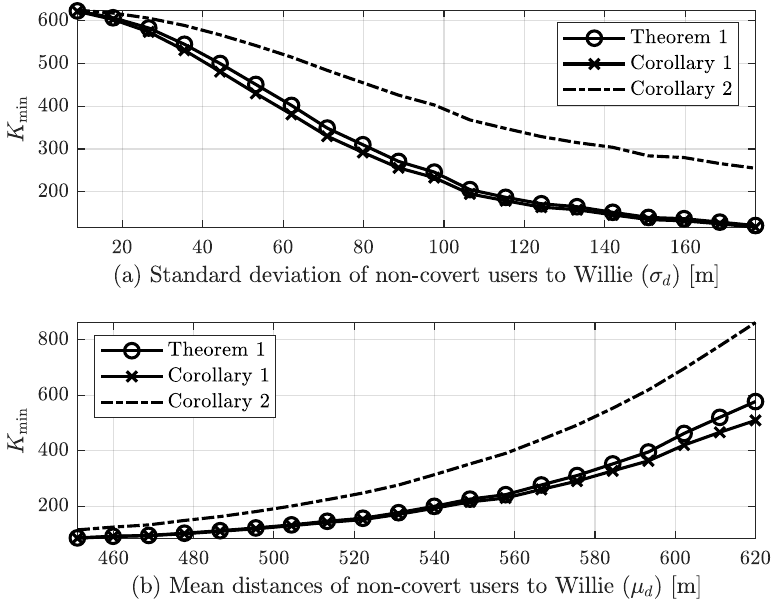}
\caption{The minimum number of cooperative users $K_{\min}$ versus standard deviation for distances of non-covert users to Willie, $\sigma_d$, and mean distances of non-covert users to Willie, $\mu_d$, for $P_a = 100mW$.}
\label{fig:Sensitivity}
\end{figure}

Fig. \ref{fig:Sensitivity} illustrates how the spatial heterogeneity of non-covert users affects $K_{\min}$. In particular, to represent the spatial heterogeneity of non-covert users, Fig. \ref{fig:Sensitivity} (a) uses the standard deviation of their distances to Willie, i.e., $\sigma_d$, and Fig. \ref{fig:Sensitivity} (b) uses the mean distance, i.e., $\mu_d$. In addition, the circle and cross markers indicate the values of $K_{\min}$ derived from Theorem \ref{Theorem:K} and Corollary \ref{Corollary:K_approx}, and the dash-dot curve denotes the $K_{\min}$ obtained from Corollary \ref{cor:homogeneous}. Consistent with the findings in Fig. \ref{fig:epsilon_Kmin}, the baseline design derived from Corollary \ref{cor:homogeneous}, as observed in Figs.\ref{fig:Sensitivity} (a) and (b), leads to an overestimation of the minimum number of cooperative users.

It is also observed in Fig.~\ref{fig:Sensitivity} (a) that as non-covert users become more spatially dispersed, i.e., as $\sigma_d$ increases, the minimum number of cooperative users decreases substantially. It is noteworthy that, as $\sigma_d$ approaches zero, the gap between the proposed heterogeneous analysis in Theorem \ref{Theorem:K} and the homogeneous baseline in Corollary \ref{cor:homogeneous} vanishes. This convergence confirms that our derived model accurately subsumes the conventional homogeneous scenario \cite{Lee23Multi,Lee24Channel, Yeom24Covert} as a special case, validating its consistency and generality. This observation offers a valuable design insight: increasing the spatial dispersion of non-covert users is beneficial for reducing the required minimum number of cooperative users, $K_{\min}$. Since a lower $K_{\min}$ leads to reduced interference at Bob, it ultimately improves the covert rate. Fig.~\ref{fig:Sensitivity} (b) shows that as all users move farther away from Willie, i.e., as $\mu_d$ increases, the required minimum number of cooperative users $K_{\min}$ also increases. This is because a larger $\mu_d$ weakens the aggregate interference power received at Willie, thereby necessitating more cooperative users to satisfy the covert constraint. Combining the results from Figs. \ref{fig:Sensitivity} (a) and (b) yields a comprehensive design guideline: to maximize covert communication performance, non-covert users should be deployed as close to Willie as possible while maintaining a high degree of spatial dispersion.

\subsection{Performance Comparison with Existing Work}

\begin{figure}[t]
\centering
\includegraphics[width=\figwidth]{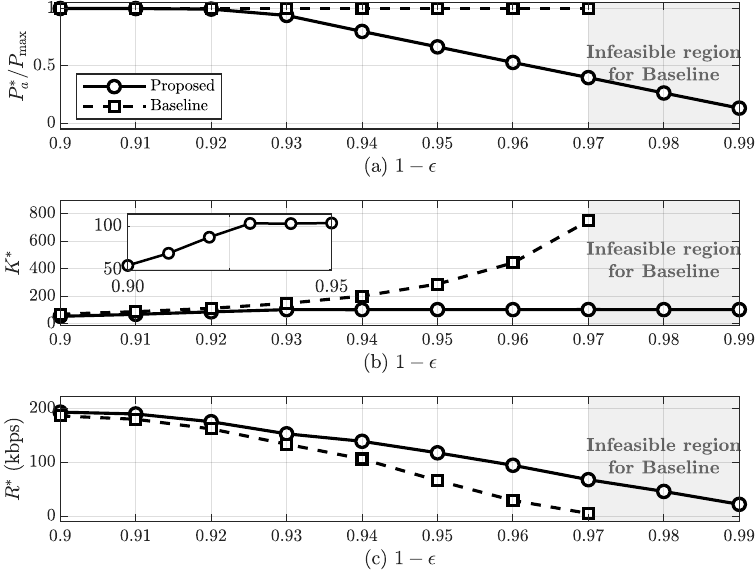}
\caption{Performance comparison between the proposed policy (Proposed) and the Bob-only policy (Baseline) versus the covert constraint $1-\epsilon$.}
\label{fig:Policy_Comparison}
\end{figure}

In this subsection, we compare the covert communication performance of the proposed method with that of the baseline method. In the proposed method, cooperative users are selected based on both the instantaneous channel gains between the users and Bob, and Willie’s large-scale fading coefficients, as described in Proposition \ref{prop:power_profile}. In contrast, the baseline method selects cooperative users solely according to the instantaneous channel gains between the users and Bob \cite{Lee23Multi,Lee24Channel,Yeom24Covert}. In addition, when computing the optimal $K^*$, the proposed method considers the distinct distances between the users and Willie, while the baseline method assumes identical distances from users to Willie and Bob, respectively, despite the actual differences, following the assumption in \cite{Yeom24Covert}. To validate the effectiveness of the proposed method, we compare the optimized results, i.e., $P_a^*$, $K^*$, and $R^*$ with respect to the the stringency of the covert constraint, $1-\epsilon$.

In Figs. \ref{fig:Policy_Comparison} (a) and (b), we present the ratio of the optimized transmit power for the covert message to the maximum transmit power, $P_a^* / P_{\max}$, and the optimal number of cooperative users, $K^*$, respectively. It is observed that in the baseline method, a more stringent covert constraint, i.e., a smaller $\epsilon$, is met primarily by increasing the number of cooperative users rather than decreasing Alice's transmit power. Meanwhile, the proposed method adopts a similar strategy when the covertness requirement is moderate, i.e., $1-\epsilon < 0.93$. However, when the requirement becomes strict, i.e., $1-\epsilon \ge 0.93$, the proposed method shifts its strategy by decreasing the transmit power for the covert message instead of further increasing the number of cooperative users. This strategic shift occurs because, in a spatially heterogeneous network, the interference effectiveness varies significantly among users. The proposed method exploits this spatial heterogeneity by selectively utilizing only those users who effectively interfere with Willie without excessively interfering with Bob. Beyond a certain point, adding more inefficient users yields diminishing returns, making power reduction the optimal choice. In contrast, since the baseline method assumes homogeneous channel conditions, it incorrectly perceives all users as equally effective interferers, leading to a continuous and inefficient increase in the number of cooperative users.

Fig. \ref{fig:Policy_Comparison} (c) demonstrates that the proposed method improves the covert rate compared with the baseline method. In particular, as the covert constraint becomes more stringent, the performance gap between the proposed and baseline methods grows larger. This indicates that incorporating the distinct spatial locations of users into the optimization process substantially enhances covert performance, especially under stringent covert constraints, thereby highlighting the necessity of the proposed approach. Lastly, it is observed in Figs. \ref{fig:Policy_Comparison} (a), (b), and (c) that when a stringent covert constraint is required, i.e., $1-\epsilon \ge 0.97$, the baseline method fails to satisfy the covert constraint. In contrast, the proposed method is able to meet such a stringent covert constraint requirement. This implies that accounting for the distinct spatial locations of users broadens the range of achievable covert constraint stringency.

\section{Conclusion}\label{Sec:conclusion}

This paper investigated uplink multi-user covert communication in which spatially distributed users assist the covert transmission. We mathematically proved that, under the considered system model, the on--off scheme is the optimal policy for designing the interference power transmitted by distributed users. With the optimal on--off scheme, closed-form expressions were derived for the minimum number of cooperative users and the activation threshold required to satisfy a given covert constraint. Based on these expressions, we conducted a comprehensive analysis of the covert communication performance with respect to system parameters. In addition, we proposed an intelligent piecewise search algorithm to maximize the achievable rate under both power and covert constraints, jointly optimizing the transmit power of the covert message and the interferences transmitted by spatially heterogeneous users. Crucially, we demonstrated for the first time that the spatial heterogeneity of distributed users acts as a fundamental advantage, significantly enhancing covert communication performance compared to conventional homogeneous scenarios.

\appendices

\section{Proof of Proposition \ref{prop:power_profile}} \label{Appendix:power_profile}

Given a fixed $P_a$, the optimization problem in~\eqref{eq:original_opt} can be reformulated as
\begin{subequations}\label{eq:powerprofile_Opt}
\begin{align}
  \bar{P}^* & = \argmax_{\bar{P}} \frac{P_a |h^a_{a,b}|^2}{\sum_{m=1}^M P_m|h^a_{m,b}|^2 + \sigma_b^2} \\
  & \text{s.t.} ~~ \eqref{eq:CovertConst} ~ \text{and} ~ \eqref{eq:PowConst_Noncovert}.\nonumber
\end{align}
\end{subequations}

First, we analyze the minimum DEP associated with the covert constraint in \eqref{eq:CovertConst}. To derive the DEP, we employ a moment-matching based approximation technique, where the sum of Gamma-distributed random variables is approximated as another Gamma distribution with the same first and second moments. The feasibility and tightness of this approximation scheme have been extensively analyzed and validated in \cite{Mathai93AHandbook,Moschopoulos85TheDistribution,Covo14AnovelSingleGammaApprox}. Thus, $T_w$ in \eqref{eq:test_statistic} can be reformulated to 
\begin{equation} \label{eq:pdfTest}
T_w - \sigma_w^2 \sim \begin{cases}
    \Gamma\!\left( \frac{\psi^2}{\Psi},\,\frac{\Psi}{\psi} \right), & \mathcal{H}_0, \\[3pt]
    \Gamma\!\left( \frac{(\psi + \lambda_{a,w}P_a)^2}{\Psi + (\lambda_{a,w} P_a)^2},\,\frac{\Psi + (\lambda_{a,w} P_a)^2}{\psi + \lambda_{a,w}P_a} \right), & \mathcal{H}_1,
  \end{cases}
\end{equation}
where $\psi = \sum_{m=1}^M \lambda_{m,w}P_m$ and $\Psi = \sum_{m=1}^M \lambda_{m,w}^2P_m^2$.
For sufficiently large $M$, the Gamma distribution converges to a normal distribution \cite{Papoulis02Probability}, yielding
\begin{equation} \label{eq:ProbDist_Tw}
    T_w - \sigma_w^2 \sim \begin{cases}
      \mathcal{N}\!\left(\psi,\,\Psi \right), & \mathcal{H}_0, \\[3pt]
      \mathcal{N}\!\left(\psi + \lambda_{a,w}P_a,\, \Psi + \lambda^2_{a,w}P^2_a \right), & \mathcal{H}_1.
\end{cases}
\end{equation}
Based on the probabilistic distribution of \eqref{eq:ProbDist_Tw} and the definition of the DEP in \eqref{eq:DEP}, the DEP can be calculated as
\begin{align}
\zeta &= \Pr(T_w > \gamma \mid \mathcal{H}_0) + \Pr(T_w \leq \gamma \mid \mathcal{H}_1) \nonumber \\
&= \underbrace{Q\!\left(\frac{\gamma-\psi}{\sqrt{\Psi}}\right)}_{(a)} 
   + \underbrace{Q\!\left(\frac{(\psi + \lambda_{a,w}P_a) - \gamma}{\sqrt{\Psi + \lambda_{a,w}^2P_a^2}}\right)}_{(b)}, \label{eq:DEP_Gaussian}
\end{align}
where $Q(\cdot)$ denotes the standard Gaussian $Q$-function. Assuming that the false alarm probability is set to $\alpha$, i.e., $P_{\rm FA} = \alpha$, the detection threshold $\gamma$ can be obtained from result (a) in \eqref{eq:DEP_Gaussian} as
\begin{equation}
    \gamma = \sqrt{\Psi}\,Q^{-1}(\alpha) + \psi, \label{eq:FAwithAlpha}
\end{equation}
and with the result (b) in \eqref{eq:DEP_Gaussian} and \eqref{eq:FAwithAlpha}, the corresponding miss detection probability is given by
\begin{equation}
    P_{\rm MD} = Q\!\left(\frac{\lambda_{a,w}P_a - \sqrt{\Psi}\,Q^{-1}(\alpha)}{\sqrt{\Psi + \lambda_{a,w}^2P_a^2}}\right).
\end{equation}
Consequently, the DEP, as a function of $\alpha$, can be expressed as
\begin{equation}\label{eq:DEP_Pi}
\zeta = \alpha + Q\!\left(\frac{\lambda_{a,w}P_a - \sqrt{\Psi}\,Q^{-1}(\alpha)}{\sqrt{\Psi + \lambda_{a,w}^2P_a^2}}\right).
\end{equation}
It is noteworthy that $\zeta$ in \eqref{eq:DEP_Pi} is a monotonically increasing function of $\Psi$ for any $0 \leq \alpha \leq 1$. Moreover, the minimum detection error probability $\zeta_{\min}$ also increases monotonically with $\Psi$. Suppose that for two values $\Psi < \hat\Psi$, there exist optimal thresholds $\alpha^*$ and $\hat\alpha$ minimizing $\zeta(\alpha,\Psi)$ and $\zeta(\alpha,\hat\Psi)$, respectively. Then, by definition,
\[
\zeta_{\min}(\Psi) = \zeta(\alpha^*,\Psi) \leq \zeta(\hat\alpha,\Psi),
\]
and because $\zeta$ increases with $\Psi$ for any fixed $\alpha$,
\[
\zeta(\hat\alpha,\Psi) < \zeta(\hat\alpha,\hat\Psi) = \zeta_{\min}(\hat\Psi).
\]
Therefore, $\zeta_{\min}(\Psi)$ is strictly increasing with respect to $\Psi$. By applying Sedrakyan’s Lemma~\cite{Sedrakyan97AboutTheApplications}, we obtain
\begin{equation} \label{eq:Cauchy}
  \frac{\Psi}{M^2}
  = \frac{1}{M}\sum_{m=1}^M \!\left(\frac{\lambda_{m,w}P_m}{M}\right)^2
  \ge \frac{1}{M^3}\!\left(\sum_{m=1}^M \lambda_{m,w}P_m\right)^2,
\end{equation}
where the inequality becomes tight as $M$ grows large. Hence, for sufficiently large $M$, we have $\Psi \approx \psi^2/M$, and $\zeta_{\min}$ becomes a monotonically increasing function of $\psi^2/M$. Therefore, the covert constraint can be equivalently expressed as
\begin{equation}\label{eq:convert_DEP_Pi}
  \psi = \sum_{m=1}^M \lambda_{m,w}P_m \geq \delta,
\end{equation}
where $\delta \triangleq \sqrt{M}\,\zeta_{\min}^{-1}(1-\epsilon)$.

The optimization problem in \eqref{eq:powerprofile_Opt} can be reformulated as
\begin{subequations}\label{eq:powerprofile_Opt_reformul}
\begin{align}
  \bar{P}^* = \arg\min_{\bar{P}} & \sum_{m=1}^M P_m|h^a_{m,b}|^2 + \sigma_b^2,  \label{eq:powerprofile_Opt_reformul_obj}\\
  & \text{s.t.} ~~ \sum_{m=1}^M \lambda_{m,w}P_m \ge \delta ~~ \text{and} ~~ \eqref{eq:PowConst_Noncovert}     \label{eq:CovertConst_IntfVer}
\end{align}
\end{subequations}
Now, to solve the optimization problem in \eqref{eq:powerprofile_Opt_reformul}, we first derive the Karush–Kuhn–Tucker (KKT) conditions with Lagrange multipliers, i.e., $\{ \phi_m, \omega_m, \nu | m \in \{ 1,2,\ldots, M\} \}$, as
\begin{itemize}
  \item \textbf{Stationarity:} $|h^a_{m,b}|^2 - \phi_m^* + \omega_m^* + \nu^*\lambda_{m,w} = 0$, $\forall m$.
  \item \textbf{Complementary slackness:} $\phi_m^* P_m^* = 0$, $\omega_m^*(P_m^* - P_{\max}) = 0$, and $\nu^* (\sum_{m=1}^{M} \lambda_{m,w}P_m - \delta) = 0$.
  \item \textbf{Feasibility:} $0 \leq P_m^* \leq P_{\max}$, $\sum_{m=1}^M \lambda_{m,w}P_m^* \ge \delta$, and $\phi_m^*, \omega_m^*, \nu^* \ge 0$.
\end{itemize}
The feasible solutions, which correspond to the combinations of optimization variables and Lagrange multipliers satisfying the KKT conditions, are summarized in Table \ref{Tbl:KKT_Conditions}.
\begin{table}[t]
\caption{Feasible solutions under the KKT conditions (NS: no solution).}
\label{Tbl:KKT_Conditions}
\centering
\renewcommand{\arraystretch}{1.1}
\setlength{\tabcolsep}{3pt}
\begin{tabular}{|c|c|c|c|}
\hline
 & $\tfrac{|h^a_{m,b}|^2}{\lambda_{m,w}} > \nu^*$ 
 & $\tfrac{|h^a_{m,b}|^2}{\lambda_{m,w}} = \nu^*$ 
 & $\tfrac{|h^a_{m,b}|^2}{\lambda_{m,w}} < \nu^*$ \\ \hline
$\omega_m^* = 0$ 
 & $P_m^* = 0$ 
 & $\lambda_{m,w}P_m^* = \delta - \!\!\sum_{m \in \mathcal{C}}\!\! P_j^*$
 & NS \\ \hline
$\omega_m^* > 0$ 
 & NS & NS & $P_m^* = P_{\max}$ \\ \hline
\end{tabular}
\end{table}
Thus, the optimal transmit power for each non-covert user is expressed as
\begin{equation}
    P_m^* =
\begin{cases}
      P_{\max}, & \frac{|h^a_{m,b}|^2}{\lambda_{m,w}} < \tau, \\[3pt]
      0, & \frac{|h^a_{m,b}|^2}{\lambda_{m,w}} > \tau, \\[3pt]
      \displaystyle \frac{\delta - \sum_{ m \in \mathcal{C} } P_{\max}}{\lambda_{m,w}}, & \frac{|h^a_{m,b}|^2}{\lambda_{m,w}} = \tau,
\end{cases} \label{eq:OptPowProfile}
\end{equation}
where $\mathcal{C}$ denotes the set of selected cooperative users, i.e., $\mathcal{C} = \{m': |h^a_{m',b}|^2 / \lambda_{m',w} < \tau \}$ and $\tau = \nu^*$. Note that if $U_m$ is a non-covert user satisfying $|h^a_{m,b}|^2 / \lambda_{m,w} = \tau$,  the user cannot determine the transmit power from \eqref{eq:OptPowProfile} since $U_m$ cannot have the knowledge of the set $\mathcal{C}$, i.e., the channels between the other users and Bob. Consequently, the optimal power profile is given by the following on–off scheme, as expressed in \eqref{eq:optimal_Pm}.
\begin{align}\label{eq:optimal_Pm}
P_m^* =
\begin{cases}
P_{\max}, & \frac{|h^a_{m,b}|^2}{\lambda_{m,w}} \le \tau, \\[3pt]
0, & \frac{|h^a_{m,b}|^2}{\lambda_{m,w}} > \tau.
\end{cases}
\end{align}
\qed

\section{Proof of Lemma \ref{lemma:DEP}}\label{Appendix:DEP}

By leveraging the moment-matching based Gamma approximation technique introduced in \cite{Mathai93AHandbook, Moschopoulos85TheDistribution, Covo14AnovelSingleGammaApprox}, the PDF of $T_w - \sigma_w^2$ in \eqref{eq:test_statistic_RV} can be approximated as
\begin{equation} \label{eq:test_statistic_RV_GammaApprox}
   T_w - \sigma_w^2  \\[-2pt]
      \sim 
     \begin{cases}
      \displaystyle \Gamma\!\left( \frac{\theta^2}{\Theta},\,\frac{\Theta}{\theta} \right), 
      \hfill \mathcal{H}_0, \\
      \displaystyle \Gamma\!\left( 
        \frac{\theta^2}{\Theta},
        \frac{\Theta }{\theta}\right) \ast \Gamma\left(1, \lambda_{a,w}P_a
      \right), 
      \hfill \mathcal{H}_1,
     \end{cases} 
\end{equation}
where $\theta = \sum_{i=1}^K \lambda_{m_i,w}P_{\max}$ and $\Theta = \sum_{i=1}^K \lambda_{m_i,w}^2P_{\max}^2$. Recall that a Gamma distribution converges to a Gaussian distribution as its shape parameter increases \cite{Ross19Probability,Casella02StatInference}. In our scenario, the sufficiently large number of cooperative users, $K$, results in a large shape parameter in \eqref{eq:test_statistic_RV_GammaApprox}. Consequently, the equation of \eqref{eq:test_statistic_RV_GammaApprox} can be approximated to
\begin{equation} \label{eq:test_statistic_RV_CLTApprox}
   T_w - \sigma_w^2  \\[-2pt]
      \sim 
     \begin{cases}
      \displaystyle \mathcal{N} \left( \theta, \Theta \right), 
      \hfill \mathcal{H}_0, \\
      \displaystyle \mathcal{N} \left( \theta, \Theta \right) \ast \Gamma\left(1, \lambda_{a,w}P_a
      \right), 
      \hfill \mathcal{H}_1.
     \end{cases} 
\end{equation}
Practically, the set of selected users is unknown to Willie. Let $\mathcal{S}\subset\{1,\dots,M\}$ with $|\mathcal{S}|=K$ denote the (unknown) selection, drawn uniformly at random without replacement. Conditioned on $\mathcal{S}$, the equation of \eqref{eq:test_statistic_RV_CLTApprox} can be reformulated by 
\begin{equation} \label{eq:test_statistic_RV_CLTApprox_fixedSet}
   T_w - \sigma_w^2 \mid \mathcal{S} \\[-2pt]
      \sim 
     \begin{cases}
      \displaystyle \mathcal{N} \left( \theta_\mathcal{S}, \Theta_\mathcal{S} \right), 
      \hfill \mathcal{H}_0, \\
      \displaystyle \mathcal{N} \left( \theta_\mathcal{S}, \Theta_\mathcal{S} \right) \ast \Gamma\left(1, \lambda_{a,w}P_a
      \right), 
      \hfill \mathcal{H}_1,
     \end{cases} 
\end{equation}
where $\theta_{\mathcal{S}}=\sum_{m\in\mathcal{S}}\lambda_{m,w}P_{\max}$ and $\Theta_{\mathcal{S}}=\sum_{m\in\mathcal{S}}\lambda_{m,w}^{2}P_{\max}^{2}$. 
As $\mathcal{S}$ is randomly drawn and unknown to Willie, the overall statistics of $T_w-\sigma_w^2$ must incorporate this selection induced randomness. Equivalently, the mean and variance must be computed by averaging the conditional moments with respect to the random user selection set $\mathcal{S}$.

By the law of total expectation and variance~\cite{Ross19Probability,Casella02StatInference},
the overall mean, $\Xi$ and variance, $\Sigma$ of $T_w-\sigma_w^2$ are obtained by averaging the conditional mean $\theta_{\mathcal{S}}$ and variance $\Theta_{\mathcal{S}}$ over the random user selection $\mathcal{S}$,
yielding
\begin{align}
\Xi 
&\triangleq \mathbb{E}_{\mathcal{S}}[\theta_{\mathcal{S}}]
= \frac{K}{M}\sum_{i=1}^{M}\lambda_{i,w}P_{\max},
\label{eq:Xi_marg}
\end{align}
and
\begin{align}\label{eq:Sigma_marg}
\Sigma 
&\triangleq \mathbb{E}_{\mathcal{S}}[\Theta_{\mathcal{S}}]
+ \mathrm{Var}_{\mathcal{S}}[\theta_{\mathcal{S}}] \nonumber\\[4pt]
&= P_{\max}^2 \Bigg[
   \frac{K}{M}\sum_{i=1}^{M}\lambda_{i,w}^{2}
   + \frac{K(M-K)}{M-1} \nonumber \\[-2pt] 
   &\hspace{5em}\times \frac{1}{M}\sum_{i=1}^{M}
     \left(
       \lambda_{i,w}
       - \frac{1}{M}\sum_{j=1}^{M}\lambda_{j,w}
     \right)^{2}
    \Bigg],
\end{align}
where $\mathbb{E}_{\mathcal{S}}[\Theta_{\mathcal{S}}]$ is the expected conditional variance of the interference, 
and $\mathrm{Var}_{\mathcal{S}}(\theta_{\mathcal{S}})$ is the variance of the conditional mean due to random user selection. The latter term, $\mathrm{Var}_{\mathcal{S}}(\theta_{\mathcal{S}})$, incorporates the finite population correction (FPC) \cite{Cochran77Sampling,Lohr19Sampling} since the users are sampled without replacement.

Accordingly, the false-alarm probability, $P_{FA}$, is defined as
\begin{equation}\label{eq:PFA_Qform}
P_{\mathrm{FA}}
= \Pr\!\left(T_w - \sigma_w^2 > \hat{\gamma} \,\middle|\, \mathcal{H}_0 \right)
\approx Q\!\left(\frac{\hat{\gamma} - \Xi}{\sqrt{\Sigma}}\right),
\end{equation}
where $\hat{\gamma}\triangleq \gamma-\sigma_w^2$. In addition, the miss-detection probability at Willie can be derived by
\begin{align}\label{eq:PMD_Qform}
P_{\mathrm{MD}}
&= \Pr\!\left(T_w-\sigma_w^2 \le \hat{\gamma}\,;\,\mathcal{H}_1\right) = \Pr(X+Y \le \hat{\gamma}) \nonumber\\
& = \int_{-\infty}^{\infty}
    \frac{1}{\sqrt{2\pi\Sigma}}
    \exp\!\left(-\frac{(y - \Xi)^2}{2\Sigma}\right)
    \gamma\!\left(1, \frac{\hat{\gamma} - y}{\Delta}\right)
\, dy \nonumber\\
&= Q\!\left(-\frac{\hat{\gamma}-\Xi}{\sqrt{\Sigma}}\right)
 - \exp\!\left(-\frac{2\Delta(\hat{\gamma}-\Xi)-\Sigma}{2\Delta^2}\right) \nonumber \\
  & \hspace{5em} \times Q\!\left(-\frac{\Delta(\hat{\gamma}-\Xi)-\Sigma}{\sqrt{\Sigma}\,\Delta}\right),
\end{align}
where \(X\sim\Gamma(1,\Delta)\) with \(\Delta\triangleq P_a\lambda_{a,w}\) and \(Y\sim\mathcal{N}(\Xi,\Sigma)\). Combining \eqref{eq:PMD_Qform} with \eqref{eq:PFA_Qform}
gives
\begin{align}
\zeta (\gamma)
&= P_{\mathrm{FA}} + P_{\mathrm{MD}} \nonumber\\
&= 1 - \exp\!\left(-\frac{2\Delta(\hat{\gamma}-\Xi)-\Sigma}{2\Delta^2}\right)
       \Phi\!\left(\frac{\hat{\gamma}-\Xi-\Sigma/\Delta}{\sqrt{\Sigma}}\right) \nonumber\\
&= 1 - \exp\!\left(-\frac{2\Delta(\hat{\gamma}-\Xi)-\Sigma}{2\Delta^2}\right)
       Q\!\left(-\,\frac{\Delta(\hat{\gamma}-\Xi)-\Sigma}{\sqrt{\Sigma}\,\Delta}\right), \nonumber
\end{align}
which matches \eqref{eq:DEP_general}. 
\qed

\section{Proof of Theorem \ref{Theorem:K}} \label{Append:ProofMinK}

From~\eqref{eq:Opt_DEP}, the condition $\zeta_{\min}=1-\epsilon$ can be written as
\begin{equation}
1-\frac{1}{\sqrt{\pi}\!\left(\sqrt{\tfrac{\Sigma}{2\Delta^2}}+\sqrt{\tfrac{\Sigma}{2\Delta^2}+\tfrac{4}{\pi}}\right)}=1-\epsilon,
\end{equation}
which is equivalent to
\begin{equation}\label{eq:Sigma_condition_append}
\frac{\Sigma}{2\Delta^2}\ = \left(\frac{1-4\epsilon^2}{2\epsilon\sqrt{\pi}}\right)^{\!2}.
\end{equation}
Let $c_\epsilon \triangleq \frac{1/\epsilon^2 - 8 + 16\epsilon^2}{2\pi}$ and recall $\Delta=P_a\lambda_{a,w}$.
By substituting the marginal second-moment term $\Sigma$ from~\eqref{eq:Sigma_marg} into~\eqref{eq:Sigma_condition_append}, we obtain
\begin{multline} \label{eq:K_condition_start}
\frac{1}{(P_a\lambda_{a,w})^2}
P_{\max}^2
\Bigg[
    \frac{K}{M}\sum_{m=1}^{M}\lambda_{m,w}^{2} \\
    + \frac{K(M-K)}{M-1}\cdot \frac{1}{M}\sum_{i=1}^{M}
\Big(\lambda_{i,w}-\tfrac{1}{M}\!\sum_{j=1}^{M}\lambda_{j,w}\Big)^{2}
\Bigg]
 = c_\epsilon.
\end{multline}
Define the empirical mean-square and variance
\[
E \triangleq \mathbb{E}[\lambda_w^2]\!=\!\frac{1}{M}\sum_{i=1}^{M}\lambda_{i,w}^{2}
\]
and
\[
V \triangleq \mathrm{Var}[\lambda_w]\!=\!\frac{1}{M}\sum_{i=1}^{M}\!
\Big(\lambda_{i,w}-\tfrac{1}{M}\sum_{j=1}^{M}\lambda_{j,w}\Big)^{2}.
\]
Then~\eqref{eq:K_condition_start} can be rearranged as
\begin{equation}\label{eq:K_condition_general}
\frac{P_{\max}^2 K}{P_a^2 \lambda_{a,w}^2}\,\Big(E+\tfrac{M-K}{M-1}\,V\Big)\ = c_\epsilon.
\end{equation}
For large $M$, $\tfrac{M-K}{M-1} \approx \tfrac{M-K}{M} = 1-\tfrac{K}{M}$, so that
\[
E+\tfrac{M-K}{M-1}V \ \approx\ E+V - \tfrac{K}{M}V.
\]
Introduce the constant, $C \triangleq P_{\max}^2 / (P_a^2\lambda_{a,w}^2c_\epsilon)$,
and rewrite \eqref{eq:K_condition_start} with equality at the threshold as
\[
CK\Big(E+V-\tfrac{K}{M}V\Big) = 1.
\]
Solving via the quadratic formula and simplifying gives
\[
K =\frac{M}{2V}\left((E+V)\ \pm\ \sqrt{(E+V)^2-\frac{4V}{CM}}\right).
\]
Since we seek the minimum $K$, the physically relevant solution is the smaller root,
\[
K_1 = \frac{M}{2V}\left((E+V)-\sqrt{(E+V)^2-\frac{4V}{CM}}\right).
\]
Consequently, considering the integer constraint on $K$, the minimum number of cooperative users is determined as $K_{\min} = \lceil K_1 \rceil$. This concludes the proof.
\qed

\bibliographystyle{IEEEtran}
\bibliography{ref_covert}

\end{document}